\documentclass[aps,pra, twocolumn, nofootinbib]{revtex4-1}



\usepackage[T1]{fontenc }
\usepackage[utf8]{inputenc}

\usepackage{float}
\usepackage{graphicx}
\usepackage{epsfig,epstopdf}
\usepackage[table]{xcolor}
\usepackage{color}
\definecolor{marek}{rgb}{.5,.5,.1}

\usepackage{soul}



\usepackage{times}

\usepackage{amsmath}
\usepackage{bbm}
\usepackage{amsfonts}
\usepackage{amssymb}
\usepackage{amsthm}
\usepackage{mathtools}

\usepackage{algorithmic}

\usepackage{algorithm}

\usepackage{complexity}
\newclass{\sharpP}{\#P}
\newclass{\classP}{P}

\usepackage{enumerate}

\usepackage{enumitem}

%
%


%

%
\newtheorem{theorem}{Theorem}

\newtheorem{lemma}[theorem]{Lemma}

%


\definecolor{DarkGray}{gray}{0.25} 
\definecolor{MidGray}{gray}{0.38} 
\definecolor{NeutralGray}{gray}{0.5}
\definecolor{LightGray}{gray}{0.7}
\definecolor{lightGray}{gray}{0.85}
\definecolor{DarkRed}{rgb}{0.7,0,0}
\definecolor{DarkBlue}{rgb}{0,0,0.5}
\definecolor{SteelBlue}{rgb}{0,0.4,0.6}
\definecolor{Orange}{rgb}{0.7,0.5,0}
\definecolor{Violette}{rgb}{0.5,0,0.5}
\definecolor{Sand}{rgb}{0.84,0.8,0.55}
\definecolor{niceblue}{rgb}{0.33,0.5,0.8}
\definecolor{OliveGreen}{RGB}{0,102,102}
\definecolor{NiceGreen}{RGB}{0,153,72}
\definecolor{LightGreen}{RGB}{0,200,72}
\definecolor{newblue}{RGB}{40,210,251}
\definecolor{lightblue}{RGB}{179,231,251}
\definecolor{steelblue}{RGB}{70,130,180}
\definecolor{cred}{RGB}{179,28,28}
\definecolor{applegreen}{rgb}{0.55, 0.71, 0.0}
\usepackage{tcolorbox}

\tcbset{
    coltitle=black,fonttitle=\bfseries,
    colback=orange!7!white,colframe=gray!25!white,width=0.95\textwidth,valign=center,
    before=\par\bigskip\centering,after=\par
    }







\newcommand{\ket}[1]{|{#1}\rangle}

\newcommand{\bra}[1]{\langle{#1}|}
\newcommand{\braket}[2]{\left\langle{#1}\middle|{#2}\right\rangle}

\newcommand{\ketbra}[2]{\ket{#1} \!\! \bra{#2}}



\newcommand{\mb}[1]{\mathbb{#1}}



\newcommand{\p}{\textsf{P}}

\newcommand{\np}{\textsf{NP}}




\newtheorem{problem}{Problem}

\usepackage[breaklinks=true]{hyperref}

\hypersetup{%
    draft, 
    colorlinks=true, linktocpage=true, pdfstartpage=3, pdfstartview=FitV,%
    colorlinks=true, 
    linktocpage=false, pdfstartpage=3, 
    breaklinks=true, pdfpagemode=UseNone, 
    plainpages=false, bookmarksnumbered, 
    hypertexnames=true, 
    urlcolor=webbrown, linkcolor=DarkBlue, citecolor=NiceGreen, 
}



\definecolor{dominik}{RGB}{237,16,118}

\newcommand{\fu}{Dahlem Center for Complex Quantum Systems, Freie Universit{\"a}t Berlin, 14195 Berlin, Germany}

\begin{document}

\title{Contracting projected entangled pair states is average-case hard} 
\author{
Jonas Haferkamp, Dominik Hangleiter, Jens Eisert, Marek Gluza
}
\affiliation{\fu}


\begin{abstract}
An accurate calculation of the properties of quantum many-body systems is one of the most important yet intricate 
challenges of modern physics and computer science.
In recent years, the tensor network ansatz has established itself as one of the most promising approaches enabling striking efficiency of simulating static properties of one-dimensional systems and abounding numerical applications in condensed matter theory.
In higher dimensions, however, a connection to the field of computational complexity theory has shown that the accurate normalization of the two-dimensional tensor networks called projected entangled pair states (PEPS) is \sharpP-complete. Therefore, an efficient algorithm for PEPS contraction would allow to solve exceedingly difficult combinatorial counting problems, which is considered highly unlikely.
Due to the importance of understanding two- and three-dimensional systems the question currently remains: Are the known constructions typical of states relevant for quantum many-body systems?
In this work, we show that an accurate evaluation of normalization or expectation values of PEPS is as hard to compute for typical instances as for special configurations of highest computational hardness.
We discuss the structural property of average-case hardness in relation to the current research on efficient algorithms attempting tensor network contraction, 
hinting at a wealth of possible further insights into the average-case hardness of important problems in quantum many-body theory.
\end{abstract}


\maketitle
Determining the properties of quantum many-body systems is of paramount importance in our efforts to understand conductance and thermodynamics of solid-state materials \cite{altland2010condensed,nazarov2009quantum}, designing new sensors and devising novel quantum technologies \cite{acin2017european}, inferring nuclear processes in stars or the early universe \cite{weber2005strange,arsene2005quark}. 
However, oftentimes it is not possible to find degrees of freedom enabling a concise description of a given system in terms of an effective model featuring essentially no interactions.
In such a case there is usually no easy way out but to calculate numerically observables of interest from a Hamiltonian description \cite{georges1996dynamical,gull2011continuous,metzner2012functional,schollwock2005density,schollwock2011density,alet2018many,degrand2006lattice}.
Here, however, we face a particular challenge namely that the state space of quantum many-body systems demands a number of parameters that grows exponentially with the amount of constituents of the system.
If so, even storing the state of the system on a computer becomes impossible and hence one seeks for efficient variational families of states. 
Tensor networks are a prime example of such an ansatz class 
\cite{fannes1992finitely,verstraete2008matrix,schollwock2011density,orus2014practical,bridgeman2017hand,AreaReview}.
Despite their spectacular success in one-dimension \cite{verstraete2004matrix,gobert2005real,verstraete2006matrix,manmana2007strongly,trotzky2012probing,fukuhara2013quantum,PhysRevB.81.064439,bardarson2012unbounded,karrasch2012finite,PhysRevLett.106.217206,ronzheimer2013expansion,arad2017rigorous} as so-called matrix-product states \cite{perez2006matrix,verstraete2006matrix,verstraete2008matrix}, the most natural tensor network ansatz in two-dimensions, called projected entangled pair states (PEPS) \cite{verstraete2006criticality}, turned out to be burdened by a peculiar difficulty:
even to calculate the normalization of PEPS is computationally intractable as has been shown by \citet{schuch_hardness_2007}.

More precisely, the normalization or evaluation of a local expectation value within the PEPS ansatz class is a computational task which is complete for the complexity class \sharpP, i.e., is as hard as any other problem in this class \cite{arora2009computational,nielsen2002quantum,aaronson2013quantum}.
A paradigmatic \sharpP\, problem consists in counting the solutions of the traveling salesman problem, which is an optimization problem, complete for the class \np. 
Intuitively, counting the solutions to a hard problem can only be harder. 
Within the current state of knowledge in computer science the optimal runtime for \np-complete problems is unknown. 
However,  the \emph{exponential-time hypothesis} \cite{lokshtanov2013lower} conjectures that for any algorithm attempting a solution at these problems there exist instances demanding an exponential runtime. 

Physically, one can invoke the Church-Turing-Deutsch principle \cite{deutsch1985quantum} that interprets computations as physical processes. 
For example, \np\, has been established to correspond to the cooling of spin glasses \cite{barahona1982computational}.
These materials are known to sometimes take an extremely long time to cool down. 
On the other hand, very many solid-state materials seem to cool down much faster. 
Indeed, insights in computer science suggest that the hardness of \np-complete problems lies in few tough instances with particularly rugged landscape. 
Phenomena like this are described in the framework of \emph{average-case} complexity. 
While \np-complete problems are unlikely to be hard on average \cite{feigenbaum_selfreducible_1991}, average-case hard problems are ubiquitous for the class \sharpP.
Recently, first examples directly relevant to demonstrating computational separation between classical and quantum devices have been pointed out \cite{aaronson_bosonsampling_2010,bouland_quantum_2018}. 

There are several approaches for a rigorous theory of average-case complexity. Arguably the most natural is \emph{random self-reducibility}: 
The idea is that a machine powerful enough to solve e.g.\ three quarters of the instances would allow to solve all instances. 
Thus, it becomes implausible to find heuristic algorithms that solve significant numbers of instances as the self-reducibility structure would imply efficiency even for those instances that are particularly hard. 
Hence, while \sharpP-hardness is a very strong statement, it does not preclude the existence of efficient practical algorithms that are capable of solving relevant instances. 

In this work, we provide strong complexity theoretical indications that the latter is not the case for generic PEPS due to a random self-reducibility structure that we uncover. 
This extends the worst-case \sharpP-hardness result  \cite{schuch_hardness_2007} to the average case and is an even more challenging obstruction to overcome.
Technically, we make an extensive use of the recent insightful work by \citet{bouland_quantum_2018}, where average-case hardness has been established in the context of quantum circuits, and we also employ some of the results established by \citet{aaronson_bosonsampling_2010}.

In certain special instances fast algorithms might still be feasible.
For example it is known that matrix-product states admit a polynomial time deterministic contraction algorithm \cite{PhysRevA.74.022320}.
But even in two dimensions, this can happen under strong physical assumptions forcing the problem to admit a local structure \cite{anshu2016local,schwarz_PEPS_2017}.
Additionally, for certain subclasses some heuristic algorithms 
 \cite{FinitePEPScontractions,levin2007tensor,PhysRevB.80.155131,PhysRevLett.103.160601,PhysRevB.81.174411,yang2017loop,PhysRevB.95.045117,
verstraete2004renormalization,PhysRevB.78.155117,vanderstraeten2016gradient,fishman2017faster,
doi:10.1143/JPSJ.64.3598,NISHINO199669,nishino1997corner,orus2009simulation,orus2012exploring,AugustineThermal,
doi:10.1137/050644756,PhysRevA.74.022320,arad2010quantum,anshu2016local,schwarz_PEPS_2017
}
(see Refs.\ \cite{ran2017review,FinitePEPScontractions} for reviews)
yield  results of practical importance \cite{corboz2010simulation,PhysRevB.84.041108,SciPostPhys.3.4.030,PhysRevLett.111.137204,PhysRevB.94.235146,yan2011spin,PhysRevB.91.224431,PhysRevB.93.060407,PhysRevLett.118.137202,chen2018non}. 
 Our average-case hardness result, however, suggests that these approaches could break down even for relevant PEPS instances as otherwise difficult computational problems would admit (quasi-) polynomial algorithms.

Physically, for disordered systems, one would expect any accurate ground state approximation by a PEPS to inherit the randomness of the Hamiltonian \cite{wahl2017efficient}. Hence in this setting, we provide evidence of intractability.
Oftentimes, however, further physical assumptions are justified: 
While these completely generic PEPS are relevant for the study of strongly disordered systems, 
in many practically meaningful settings (in particular in the study of topological order), the relevant PEPS are translation-invariant. 
Remarkably, a worst-to-average case reduction as described in this letter works just as well for translation-invariant systems but we are unaware of a hardness result in the worst case for such systems.

\paragraph*{Projected entangled pair states.}\label{section:PEPS}
Here we recall the definition of PEPS \cite{PEPSOld} and review the computational problem from Ref.~\cite{schuch_hardness_2007} concerning the contraction of PEPS. We consider a family of graphs $G=(V,E)$ with $|V|=N$. Every vertex $v$ stands for a local spin system described by a Hilbert space $\mathcal{H}_v:=\mathbb{C}^d$. The physical Hilbert space is, thus, $\mathcal H\coloneqq \mathcal{H}_v^{\otimes N}=(\mathbb{C}^d)^{\otimes N}$. In the projective construction of PEPS one thinks of every edge $e\in E$ as a maximally entangled state $\sum_{i=1}^D \ket i\ket i$ in a virtual $D$-dimensional spin systems. A specific PEPS is then described by linear operators $P^{[v]}:\mathbb{C}^{D}\otimes \cdots \otimes \mathbb{C}^D\to \mathbb{C}^d$. It is defined as the state vector in $\mathcal H$ resulting from the application of all $P^{[v]}$ for all $v\in V$. Note that by this the obtained PEPS is not necessarily normalized.
The virtual dimension is assumed to satisfy $D=\poly(N)$ and is called \emph{bond dimension}. In our discussion, it will be crucial to discriminate between the PEPS, which is a state vector in $\mathcal H$, and its specification $\left(P^{[v]}\right)_v$.  We will refer to the latter as \emph{PEPS-data}. A PEPS is called \emph{translation-invariant} if the local tensors satisfy $P^{[v]}=P^{[w]}=P$ for all $v,w\in V$. These states have already been proven to be immensely useful in condensed matter research but the full regime of applicability is still open. 
Here, we assume open boundary conditions but our results carry over to the periodic case too.
\paragraph*{PEPS evaluation.}
It strikes as a tremendous advantage that PEPS are described by polynomial data only. 
However, the physical problem we want to tackle remains notoriously difficult in that contraction of PEPS is computationally hard. This is needed for obtaining physical quantities of interest like expectation values of local observables.
Specifically, the following computational tasks are the essential ingredients of PEPS contraction algorithms:
\begin{problem}[PEPS-contraction]
\label{problem:peps contraction}
\begin{algorithmic}
	\REQUIRE
		 A graph $G$ and corresponding finite PEPS-data $\left(P^{[v]}\right)_v$ describing an unnormalized state $\ket \psi$ and with bond dimension $D = \poly(N)$.
	\ENSURE
		$\langle \psi \ket \psi$.
\end{algorithmic}
\end{problem}

It is one of the key insights in Ref.\ \cite{schuch_hardness_2007} that this problem is in fact \sharpP-complete for the case that $G$ is a square lattice.
In the following, we recall the arguments leading to this observation. The construction uses measurement based quantum computing \cite{raussendorf_oneway_2001,raussendorf_cluster_2003,gross2007novel}. Measurement based quantum computing based on cluster states
performs a computation by initializing the cluster state on a square lattice and successively applying local sharp (projective) measurements to the local qubits. This is a universal model of a quantum computer and we can use it to encode any quantum circuit in a PEPS with polynomially bounded bond dimension. Notice first that the cluster state is a PEPS with bond dimension $D=2$. However, the outcome of the quantum computation performed by the measurements depends on the random outcomes. This is dealt with by correcting the outcome with Pauli operators depending on the random outcomes. The PEPS encoding the quantum circuit is now obtained by applying an additional projector $\ketbra{a}{a}$, where $a$ is the outcome that does not give rise to a non-trivial Pauli-correction. Hardness follows from encoding the problem of counting solutions for a Boolean formula: Given a Boolean formula $f$, finding $\#_1(f):=|\{x,f(x)=1\}|$ is \sharpP-complete.

\paragraph*{Main result.}\label{sec:average}
 Let us consider a generic PEPS in the sense that all entries of the tensor $P^{[v]}$ are drawn independently at random from the finite precision approximation of the normal distribution centered around $0$ and with standard deviation $\sigma$.
 We will denote this Gaussian distribution with $\mathcal P\coloneqq\mathcal{N}_{\mathbb{C}}(0,\sigma)^{D^4dN}$. 
Our main result is the following theorem:

\begin{theorem}[Average case hardness of PEPS contraction]\label{theorem:main}
	Suppose there exists an algorithm $\mathcal{O}$ that solves Problem~\ref{problem:peps contraction} for square lattices in polynomial time with probability $\frac34+\frac{1}{\poly(N)}$ when instances are drawn from $\mathcal{P}$. Then, there exists a randomized algorithm $\mathcal{O}'$ that solves any instance of Problem \ref{problem:peps contraction} in polynomial time with exponentially high probability $1-2^{-\poly(N)}$.
\end{theorem}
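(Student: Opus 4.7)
The plan is to adapt the worst-to-average case reduction techniques of Bouland et al.~\cite{bouland_quantum_2018} and Aaronson--Arkhipov~\cite{aaronson_bosonsampling_2010} via polynomial interpolation along a random path in the space of PEPS-data. Given any worst-case PEPS specification $P^*$, I would sample fresh PEPS-data $R$ from the Gaussian distribution $\mathcal{P}$ and form the affine family $P(t) := (1-t)P^* + t R$ for $t \in [0,1]$. At $t=1$ the instance $P(1)=R$ is exactly a sample from $\mathcal{P}$, while at $t=0$ it recovers the worst case. The crucial algebraic observation is that $f(t) := \langle \psi_{P(t)} \vert \psi_{P(t)}\rangle$ is a univariate polynomial in $t$ of degree at most $2N$: each of the $N$ local tensors appears multilinearly in $\ket{\psi_{P(t)}}$, and the inner product doubles this degree via complex conjugation.

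Next, I would choose $m = \Theta(N)$ interpolation nodes $t_1,\dots,t_m$ concentrated in the interval $[1-\delta,1]$ with $\delta$ an inverse polynomial in $N$. For $t$ in this range, $P(t)$ is distributed as $\mathcal{N}_{\mathbb{C}}\bigl((1-t)P^*,\, t^2 \sigma^2\,\mathrm{Id}\bigr)$, which by a direct Pinsker-type bound between shifted and rescaled Gaussians lies within total variation distance $1/\poly(N)$ of $\mathcal{P}$, provided that $\|P^*\|$ is polynomially bounded (which can always be arranged by rescaling the PEPS data and tracking the resulting global factor in $f(0)$). Consequently, the assumed oracle $\mathcal{O}$ still returns the correct value $f(t_i)$ with probability at least $3/4 + 1/\poly(N)$ at each node, and by Markov's inequality the fraction of corrupted outputs exceeds $1/4$ only with inverse-polynomial probability.

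With a degree-$2N$ polynomial evaluated at $m > 4N+1$ points and at most a $1/4$-fraction of errors, I would then apply Berlekamp--Welch error correction, which is an algebraic procedure and hence works over $\mathbb{Q}$ or $\mathbb{C}$ for the exact outputs delivered by $\mathcal{O}$. Exact recovery of $f$ yields $f(0) = \langle\psi_{P^*}\vert\psi_{P^*}\rangle$, solving the worst-case instance. Probability amplification from $1/\poly(N)$ to $1 - 2^{-\poly(N)}$ is achieved by independent repetitions with fresh samples of $R$ followed by a majority vote, combined where needed with the standard random and downward self-reducibility of $\sharpP$.

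The main obstacle I anticipate is the coupling between Steps 2 and 3: one must simultaneously keep $\delta$ small enough that the oracle's average-case guarantee under $\mathcal{P}$ carries over with minimal loss to each shifted distribution along the path, yet spread the interpolation nodes sufficiently within $[1-\delta,1]$ that the Berlekamp--Welch reconstruction (which implicitly extrapolates a degree-$2N$ polynomial from a short interval near $t=1$ back to $t=0$) remains numerically and combinatorially well-posed. This tension is exactly the delicate ingredient pinned down in \cite{bouland_quantum_2018}; a careful hybrid/union-bound argument converting the average-case oracle over $\mathcal{P}$ into a per-instance oracle along the path, together with tight estimates on the Gaussian TVD, should suffice to push the reduction through for PEPS in precisely the way it does for output probabilities of random quantum circuits.
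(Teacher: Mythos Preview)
Your proposal is correct and is essentially the paper's own argument under the relabeling $t\mapsto 1-t$: the paper sets $R(t)=tP+(1-t)Q$ with $P$ the hard instance and $Q$ Gaussian, samples $k=\poly(N)$ points $t_i\in[0,\varepsilon)$, bounds the Gaussian total-variation shift exactly as you sketch, reconstructs the degree-$2N$ polynomial $q(t)=\langle\psi(t)|\psi(t)\rangle$ via Berlekamp--Welch, and then evaluates at $t=1$. One small slip worth fixing: because the queries $P(t_i)$ share the randomness $R$, Markov's inequality only yields that the Berlekamp--Welch error threshold is met with probability $\tfrac12+1/\poly(N)$, not $1-1/\poly(N)$ as you state; this is precisely what the paper obtains and then amplifies by repetition and majority vote, and your own closing remark about amplification ``from $1/\poly(N)$'' already reflects the correct picture.
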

This rigorous statement can be interpreted in several intuitive ways. Firstly, it rules out the possibility that the computational hardness could be hidden in particular instances that are intractable, as it says that one could use the algorithm $\mathcal{O}$ to construct an algorithm $\mathcal{O}'$ that is efficient for all inputs.
 Colloquially, assuming that most instances are easy with a known heuristic $\mathcal{O}$, then the full problem will be equally easy. 
Secondly, it is important to note that Problem \ref{problem:peps contraction} requires exact computation \cite{schuch_hardness_2007} but a different variant of Theorem \ref{theorem:main} that we prove in the appendix shows the following: Exponential precision approximation is also intractable on average, however, under stronger requirements on the algorithm $\mathcal{O}$. 
Hence, structurally, we see that if \sharpP-problems are non-trivial, then it cannot be due to very rare instances. 
Our choice of the probability distribution is similar to that of Ref.\;\cite[Section 9.1]{aaronson_bosonsampling_2010}, where the evaluation of the so-called \emph{permanent} is considered which is also a \sharpP-complete computational problem.
 Therefore, Theorem \ref{theorem:main} shows not only that both these problems are in the same complexity class, but they also have the same complexity theoretical structure.
 Note that the result holds for arbitrary graphs as well, though the statement is trivial in one dimension \cite{PhysRevA.74.022320}.
\paragraph*{Random self-reducibility.}
There are several precise mathematical candidates for a definition of \emph{average-case hardness}. We find that PEPS-contraction is average-case hard in the same sense as canonical combinatorial problems \cite{aaronson_bosonsampling_2010, lipton_permanent_1991}: Both problems admit random self-reducibility. A problem is \emph{randomly self-reducible} if the evaluation of any instance $x$ can be reduced to the evaluation of random instances $y_1,\dots,y_k$ with a bounded probability independent of the input. We will sketch how this is done for the permanent and PEPS giving the essential proof idea, see Ref.~\cite{bouland_quantum_2018} for a particularly clear exposition in the context of quantum circuits.
The complete argument can be found in Appendix \ref{appendix:proof}.

In a seminal result, \citet{lipton_permanent_1991} proved random self-reducibility for the evaluation of the permanent, a function that takes as an input a square matrix and outputs a number. The permanent of a matrix $A\in \mathbb{C}^{n\times n}$ is defined as the 'determinant without signs':
\begin{equation}
\mathrm{perm}(A)\coloneqq\sum_{\sigma\in S_n}\prod_{i=1}^n A_{i,\sigma(i)} \, , 
\end{equation}
where $S_n$ is the symmetric group. 
 However, very unlike the determinant, the permanent turns out to yield a difficult combinatorial problem: Its evaluation has been proven to be \sharpP-complete by \citet{valiant_permanent_1979}. The proof of random-self reducibility is rooted in the algebraic fact that the permanent defines a polynomial of degree $n$ in the entries of its input matrix $A$.
More precisely, the strategy is to take any (hard) instance $A$ that we want to compute, draw a uniformly random matrix $B$ and define 
\begin{align}
\label{eq:liptonrandome}
E(t)\coloneqq A+tB \, , 
\end{align}
for $t \in \mb R$. 
Notice that $E(t)$ is uniformly random for any $t$ because $B$ is, even though $E(t)$ and $E(t')$ are correlated.
The permanent of these matrices is a polynomial $q(t) \coloneqq \mathrm{perm}(E(t))$ of degree\;$n$. Even if the algorithm $\mathcal{O}$ fails to accurately output $\mathrm{perm}(A)$ it will, by assumption, likely correctly evaluate $q(t_i)$ for a choice of $t_i$. The idea is to infer $q(0)$ from the values at $\{t_i\}$ via polynomial interpolation. We will explain this step in more detail in the next paragraph for the setting of PEPS.

\paragraph*{Sketch of proof for Theorem \ref{theorem:main}.}
Let us sketch how the worst to average-case reduction works for PEPS contractions. For a detailed and formal proof we refer to Appendix \ref{appendix:proof}.
First, notice that given a bond dimension $D$, the set of possible PEPS-data admits a canonical vector space structure defined by
\begin{align*}
\left(P_1^{[v]}\right)_v+\left(P_2^{[v]}\right)_v&:=\left(P_1^{[v]}+P_2^{[v]}\right)_v\\
\lambda\left(P_1^{[v]}\right)_v&:=\left(\lambda P_1^{[v]}\right)_v, \;\;\;\;\;\lambda\in\mathbb{C}.
\end{align*}
Notice that already in this step, it is crucial to discriminate between PEPS-data and PEPS since the above definition has very little to do with the addition of the corresponding states. Intuitively, we scramble independently the individual tensors.
Given a hard instance $(P^{[v]})_v$, we draw random PEPS-data $(Q^{[v]})_v$ and define
\begin{equation}
\left(R(t)^{[v]}\right)_v \coloneqq t\left (P^{[v]}\right)_v+(1-t)\left(Q^{[v]}\right)_v.
\end{equation}
Thus, $(R(0)^{[v]})_v=(Q^{[v]})_v$ and $(R(1)^{[v]})_v=(P^{[v]})_v$.
This choice of a scrambled operator is suitable for us because it allows us to deal with a subtlety arising from the fact that the PEPS-data $(R(t)^{[v]})_v$ is not Gauss-random even though $(Q^{[v]})_v$ is.
This is different to the setting of \citet{lipton_permanent_1991} but has been worked out for boson sampling \cite{aaronson_bosonsampling_2010}, where it was shown that the difference is immaterial for small $t$.
This carries over to our case as we discuss in Appendix \ref{appendix:proof}. 
We choose $k=\poly(N)$ sampling points $t_i\in [0,\varepsilon)$, where $\varepsilon$ is polynomially small. With these sampling points we perform polynomial interpolation.

Let $\ket{\psi(t_i)}$ denote the PEPS corresponding to the data $(R(t_i)^{[v]})_v$. 
In analogy to the discussion of the permanent, we define the function $q(t)\coloneqq\langle \psi(t)\ket{\psi(t)}$, which is a polynomial in $t$ of degree $r=2N$. For each sampling point, the machine $\mathcal O$ performs the exact contraction with probability $\frac34+\frac{1}{\poly(N)}$. Using the Markov inequality, we obtain that out of the $k$ sampling points $\mathcal{O}$ outputs the correct value of the contraction $q$ for at least $\frac{k+r}{2}$ with probability $\frac12+\frac{1}{\poly(N)}$. Provided $k>r$, we can use polynomial interpolation to reconstruct the coefficients of the polynomial $q$ such that $q(1)$ is the desired PEPS contraction value. This is achieved by the so-called \emph{Berlekamp-Welch algorithm}, a result in computer science, which in polynomial runtime outputs the coefficients of $q$. Thus, using a small computational overhead, we obtain $q(1)$. Repeating this procedure, and taking the majority vote choosing the most frequent outcome, the probability of success can be amplified to $1-2^{-\poly(N)}$. We define this final outcome to be the output of the algorithm $\mathcal O'$.
\paragraph*{Translation invariance.}
In many physical applications, e.g. in solid state materials or systems admitting topological order, the system of interest is translation-invariant.
Hence, the PEPS-data should reflect this symmetry and one would naturally set all local tensors to be equal. 
In this case we do not know the corresponding computational problem to be \sharpP-hard, for example the \sharpP-hard instances in
Ref.~\cite{schuch_hardness_2007} are not translation-invariant. 
However, our worst-to-average case reduction works just as well in this special case, simply by choosing $(Q^{[v]})_v=(Q)_v$, where $Q$ is drawn from $\mathcal{N}_{\mathbb{C}} (0,\sigma)^{D^4d}$. The same argument and statement of the main theorem goes through.
 This leaves us with two mutually exclusive options: If the translation-invariant problem is hard for a complexity class $C$, then it follows that the problem is $C$-hard on average in the sense of our main theorem. If the problem is merely in \p, then it is enough to find a heuristic for about $\frac34$ of the inputs to find a full randomized algorithm.
 On the other hand, if $C=\sharpP$, then even the translation-invariant PEPS contraction problem would appear to be average-case intractable. We are unaware of random self-reducibility results for complexity classes other than \sharpP. We thus expect a dichotomy: Either the translation-invariant problem is in \p \, or it is \sharpP-complete.
\paragraph*{Evaluation precision.}
As far as we know, it is state of the art in computer science to prove random self-reducibility structures for problems given the promise that $\mathcal O$ works with 
at least exponential precision.
In fact, we can improve our main theorem for this case too, at the cost of requiring $\mathcal O$ to function with a probability of $1-\frac{1}{12N}$. 
The reason for this trade-off is that subtleties arise in the technical steps, where the Berlekamp-Welch algorithm has to be replaced with a noise-resistant method.
However, in the bigger picture, it does not seem possible to extend the seminal idea of Lipton to $\mathcal O$ working with polynomial precision.
Intuitively, we interpolate around small $t_i$ and want to evaluate at $1$. We consider it unlikely that it is possible to devise an extrapolation method which accurately outputs $q(1)$.
However, if it turned out to be the case, e.g., by future results in computer science, then our worst-to-average case reduction would work for the case of practical interest, i.e., polynomial precision of $\mathcal O$.
Related questions of precision relaxation are of interest in quantum information theory in the context of searching for quantum speed-ups.
Here, certain precision relaxations are conjectured to be average case hard as well \cite{aaronson_bosonsampling_2010,bouland_quantum_2018}.

\paragraph*{Expectation values.}
 The computational Problem \ref{problem:peps contraction} is concerned with PEPS contractions. The quantity that one computes is the norm of the respective PEPS. However, in most physical applications the quantities of interest are expectation values of a local observable $\hat{A}$
 \begin{equation}
  \langle \hat{A}\rangle_{\psi}=\frac{\langle\psi|\hat{A}|\psi\rangle}{\braket{\psi}{\psi}}.
 \end{equation}
   Notice that this problem and its unnormalized version have both been proven to be \sharpP-complete in Ref.\ \cite{schuch_hardness_2007} as well. For any algorithm that uses PEPS normalization as an intermediate step our main theorem is directly of interest and reflects the fundamental structure of the problem at hand. 
  In the general case we can prove a worst-to-average result for this quantity as well. It is easy to see that our discussion of PEPS contraction carries over to the discussion of unnormalized expectation values. We show that a close analogue of Theorem \ref{theorem:main} holds for this quantity as well. The normalized expectation value is slightly more subtle in the following sense: The analogue of the function $q$ is not a polynomial but a rational function $\frac{q}{p}$ where the degrees of both polynomials $q$ and $p$ are bounded by $2N$.
We can simply solve for the coefficients on enough sampling points to obtain the respective coefficients.
This, however, requires a stronger machine $\mathcal{O}$. This result might be further improved by the use of more sophisticated algorithms for the reconstruction of rational functions. 
   
\paragraph*{Implications on practical tensor network algorithms.} The results found here have interesting implications to the performance of 
PEPS contraction algorithms aimed at solving condensed-matter problems \cite{orus2014practical,verstraete2008matrix,schollwock2011density}. 
There are three insights that are important in this respect: Firstly, the results laid out here relate average-case to worst-case complexity. 
In that, they apply to any tensor network contraction algorithm as the structure of random self-reducibility shows that  if a given method $\mathcal O$ has trouble at less than a quarter of instances, these can be in principle also treated with a small polynomial runtime overhead by our construction of the randomized algorithm $\mathcal O'$ 
(and, for that matter, our results also pertain to algorithms in \p).
Secondly,  it is known that PEPS contraction algorithms often work well in practice for reasonable condensed-matter systems \cite{FinitePEPScontractions,ran2017review} which may seem at first sight at odds with the results presented here and in Ref.\ \cite{schuch_hardness_2007}.
For this, one has to acknowledge that many important problems have additional structure that may render the PEPS contraction feasible.
Specifically, it was proven in Ref.~\cite{schwarz_PEPS_2017} that local normalized expectation values of injective PEPS with \emph{uniformly 
gapped parent Hamiltonian} can be evaluated in quasi-polynomial time, i.e., faster than conjectured by the exponential-time hypothesis. 
Following up on this observation, it seems conceivable that one can devise PEPS algorithms that provide ground states of systems in a trivial phase (possibly even with convergence proofs), by making use of techniques of quasi-adiabatic evolution \cite{NachtergaelePhases,HastingsQuasiAdiabatic}, applying short circuits to product states as ground states of trivial parents.
Having said that, any such approach would require keeping track of ground states of families of Hamiltonians.
Thirdly, in most practical algorithms used in practice, in contrast, some initial condition for the PEPS is chosen, which is iteratively refined via sweeps, until a good convergence to the ground state is encountered. 
In fact, in practice, the PEPS data are initially often chosen randomly, following a refinement in sweeps by iteratively minimizing the energy evaluated from a local Hamiltonian. 
The results laid out here show that it is crucial to devise meaningful schemes making reasonable choices of these initial conditions. But our average-case
 hardness results of PEPS contraction indicate that one should be particularly cautious when choosing such initial states. 
 
\paragraph*{Outlook.}
In this work we presented the first average case complexity result in the context of quantum many-body systems, specifically tensor network states.
Our main result is structural, namely we prove that the hard instances of PEPS-contraction make up a significant fraction of all instances.
Physically, this means that contraction of PEPS with random tensors is likely to be computationally hard to accurately evaluate. 
Conceptually, we establish structural similarities to the evaluation of the permanent.
Our results hold under the assumption of accurate or exponential precision.
In Appendix C, we stress that also on physical grounds, to demand exponential precision is very much reasonable.
However, in a physical context it is often sufficient to evaluate observables up to polynomial precision.
 The major \emph{open problem} is thus to extend the presented analysis to this case.
For PEPS contractions establishing such a result would have direct practical implications.
Furthermore, we are not aware of any \sharpP-completeness result for translation-invariant PEPS.
Thus, the general \emph{open question} should be: What are the instances of PEPS for which known contraction methods have convergence guarantees?
It is our hope that further research at the interface between computer science and quantum many-body physics will provide exciting insights to this question.
\begin{acknowledgments}
	We thank Adam Bouland, Bill Fefferman, Juan Bermejo-Vega, Christopher Chubb, Ingo Roth, Alexander Nietner, Augustine Kshetrimayum and Radu Curticapean  for valuable discussions and comments.
  In particular we are grateful to Adam Bouland and Bill Fefferman for pointing out to us the result of \citet{rakhmanov_interpolation_2007}. This work was supported
  by the DFG (EI 519/14-1, EI 519/7-1, CRC 183 B1), the ERC (TAQ), and the Templeton Foundation.
\end{acknowledgments}

\bibliographystyle{apsrev4-1}


\begin{thebibliography}{86}%
\makeatletter
\providecommand \@ifxundefined [1]{%
 \@ifx{#1\undefined}
}%
\providecommand \@ifnum [1]{%
 \ifnum #1\expandafter \@firstoftwo
 \else \expandafter \@secondoftwo
 \fi
}%
\providecommand \@ifx [1]{%
 \ifx #1\expandafter \@firstoftwo
 \else \expandafter \@secondoftwo
 \fi
}%
\providecommand \natexlab [1]{#1}%
\providecommand \enquote  [1]{``#1''}%
\providecommand \bibnamefont  [1]{#1}%
\providecommand \bibfnamefont [1]{#1}%
\providecommand \citenamefont [1]{#1}%
\providecommand \href@noop [0]{\@secondoftwo}%
\providecommand \href [0]{\begingroup \@sanitize@url \@href}%
\providecommand \@href[1]{\@@startlink{#1}\@@href}%
\providecommand \@@href[1]{\endgroup#1\@@endlink}%
\providecommand \@sanitize@url [0]{\catcode `\\12\catcode `\$12\catcode
  `\&12\catcode `\#12\catcode `\^12\catcode `\_12\catcode `\%12\relax}%
\providecommand \@@startlink[1]{}%
\providecommand \@@endlink[0]{}%
\providecommand \url  [0]{\begingroup\@sanitize@url \@url }%
\providecommand \@url [1]{\endgroup\@href {#1}{\urlprefix }}%
\providecommand \urlprefix  [0]{URL }%
\providecommand \Eprint [0]{\href }%
\providecommand \doibase [0]{http://dx.doi.org/}%
\providecommand \selectlanguage [0]{\@gobble}%
\providecommand \bibinfo  [0]{\@secondoftwo}%
\providecommand \bibfield  [0]{\@secondoftwo}%
\providecommand \translation [1]{[#1]}%
\providecommand \BibitemOpen [0]{}%
\providecommand \bibitemStop [0]{}%
\providecommand \bibitemNoStop [0]{.\EOS\space}%
\providecommand \EOS [0]{\spacefactor3000\relax}%
\providecommand \BibitemShut  [1]{\csname bibitem#1\endcsname}%
\let\auto@bib@innerbib\@empty
\bibitem [{\citenamefont {Altland}\ and\ \citenamefont
  {Simons}(2010)}]{altland2010condensed}%
  \BibitemOpen
  \bibfield  {author} {\bibinfo {author} {\bibfnamefont {A.}~\bibnamefont
  {Altland}}\ and\ \bibinfo {author} {\bibfnamefont {B.~D.}\ \bibnamefont
  {Simons}},\ }\href@noop {} {\emph {\bibinfo {title} {Condensed matter field
  theory}}}\ (\bibinfo  {publisher} {Cambridge University Press},\ \bibinfo
  {year} {2010})\BibitemShut {NoStop}%
\bibitem [{\citenamefont {Nazarov}\ and\ \citenamefont
  {Blanter}(2009)}]{nazarov2009quantum}%
  \BibitemOpen
  \bibfield  {author} {\bibinfo {author} {\bibfnamefont {Y.~V.}\ \bibnamefont
  {Nazarov}}\ and\ \bibinfo {author} {\bibfnamefont {Y.~M.}\ \bibnamefont
  {Blanter}},\ }\href@noop {} {\emph {\bibinfo {title} {Quantum transport:
  introduction to nanoscience}}}\ (\bibinfo  {publisher} {Cambridge University
  Press},\ \bibinfo {year} {2009})\BibitemShut {NoStop}%
\bibitem [{\citenamefont {Acin}\ \emph {et~al.}(2018)\citenamefont {Acin},
  \citenamefont {Bloch}, \citenamefont {Buhrman}, \citenamefont {Calarco},
  \citenamefont {Eichler}, \citenamefont {Eisert}, \citenamefont {Esteve},
  \citenamefont {Gisin}, \citenamefont {Glaser}, \citenamefont {Jelezko},
  \citenamefont {Kuhr}, \citenamefont {Lewenstein}, \citenamefont {Riedel},
  \citenamefont {Schmidt}, \citenamefont {Thew}, \citenamefont {Wallraff},
  \citenamefont {Walmsley},\ and\ \citenamefont {Wilhelm}}]{acin2017european}%
  \BibitemOpen
  \bibfield  {author} {\bibinfo {author} {\bibfnamefont {A.}~\bibnamefont
  {Acin}}, \bibinfo {author} {\bibfnamefont {I.}~\bibnamefont {Bloch}},
  \bibinfo {author} {\bibfnamefont {H.}~\bibnamefont {Buhrman}}, \bibinfo
  {author} {\bibfnamefont {T.}~\bibnamefont {Calarco}}, \bibinfo {author}
  {\bibfnamefont {C.}~\bibnamefont {Eichler}}, \bibinfo {author} {\bibfnamefont
  {J.}~\bibnamefont {Eisert}}, \bibinfo {author} {\bibfnamefont
  {D.}~\bibnamefont {Esteve}}, \bibinfo {author} {\bibfnamefont
  {N.}~\bibnamefont {Gisin}}, \bibinfo {author} {\bibfnamefont {S.~J.}\
  \bibnamefont {Glaser}}, \bibinfo {author} {\bibfnamefont {F.}~\bibnamefont
  {Jelezko}}, \bibinfo {author} {\bibfnamefont {S.}~\bibnamefont {Kuhr}},
  \bibinfo {author} {\bibfnamefont {M.}~\bibnamefont {Lewenstein}}, \bibinfo
  {author} {\bibfnamefont {M.~F.}\ \bibnamefont {Riedel}}, \bibinfo {author}
  {\bibfnamefont {P.~O.}\ \bibnamefont {Schmidt}}, \bibinfo {author}
  {\bibfnamefont {R.}~\bibnamefont {Thew}}, \bibinfo {author} {\bibfnamefont
  {A.}~\bibnamefont {Wallraff}}, \bibinfo {author} {\bibfnamefont
  {I.}~\bibnamefont {Walmsley}}, \ and\ \bibinfo {author} {\bibfnamefont
  {F.~K.}\ \bibnamefont {Wilhelm}},\ }\bibinfo {title} {\emph {{The European
  quantum technologies roadmap}}},\ \href@noop {} {\bibfield  {journal}
  {\bibinfo  {journal} {New J. Phys.}\ }\textbf {\bibinfo {volume} {20}},\
  \bibinfo {pages} {080201} (\bibinfo {year} {2018})}\BibitemShut {NoStop}%
\bibitem [{\citenamefont {Weber}(2005)}]{weber2005strange}%
  \BibitemOpen
  \bibfield  {author} {\bibinfo {author} {\bibfnamefont {F.}~\bibnamefont
  {Weber}},\ }\bibinfo {title} {\emph {Strange quark matter and compact
  stars}},\ \href@noop {} {\bibfield  {journal} {\bibinfo  {journal} {Progress
  in Particle and Nuclear Physics}\ }\textbf {\bibinfo {volume} {54}},\
  \bibinfo {pages} {193} (\bibinfo {year} {2005})}\BibitemShut {NoStop}%
\bibitem [{\citenamefont {Arsene}\ \emph {et~al.}(2005)\citenamefont {Arsene},
  \citenamefont {Bearden}, \citenamefont {Beavis}, \citenamefont {Besliu},
  \citenamefont {Budick}, \citenamefont {B{\o}ggild}, \citenamefont {Chasman},
  \citenamefont {Christensen}, \citenamefont {Christiansen}, \citenamefont
  {Cibor} \emph {et~al.}}]{arsene2005quark}%
  \BibitemOpen
  \bibfield  {author} {\bibinfo {author} {\bibfnamefont {I.}~\bibnamefont
  {Arsene}}, \bibinfo {author} {\bibfnamefont {I.}~\bibnamefont {Bearden}},
  \bibinfo {author} {\bibfnamefont {D.}~\bibnamefont {Beavis}}, \bibinfo
  {author} {\bibfnamefont {C.}~\bibnamefont {Besliu}}, \bibinfo {author}
  {\bibfnamefont {B.}~\bibnamefont {Budick}}, \bibinfo {author} {\bibfnamefont
  {H.}~\bibnamefont {B{\o}ggild}}, \bibinfo {author} {\bibfnamefont
  {C.}~\bibnamefont {Chasman}}, \bibinfo {author} {\bibfnamefont
  {C.}~\bibnamefont {Christensen}}, \bibinfo {author} {\bibfnamefont
  {P.}~\bibnamefont {Christiansen}}, \bibinfo {author} {\bibfnamefont
  {J.}~\bibnamefont {Cibor}},  \emph {et~al.},\ }\bibinfo {title} {\emph
  {Quark--gluon plasma and color glass condensate at RHIC? The perspective from
  the BRAHMS experiment}},\ \href@noop {} {\bibfield  {journal} {\bibinfo
  {journal} {Nuclear Physics A}\ }\textbf {\bibinfo {volume} {757}},\ \bibinfo
  {pages} {1} (\bibinfo {year} {2005})}\BibitemShut {NoStop}%
\bibitem [{\citenamefont {Georges}\ \emph {et~al.}(1996)\citenamefont
  {Georges}, \citenamefont {Kotliar}, \citenamefont {Krauth},\ and\
  \citenamefont {Rozenberg}}]{georges1996dynamical}%
  \BibitemOpen
  \bibfield  {author} {\bibinfo {author} {\bibfnamefont {A.}~\bibnamefont
  {Georges}}, \bibinfo {author} {\bibfnamefont {G.}~\bibnamefont {Kotliar}},
  \bibinfo {author} {\bibfnamefont {W.}~\bibnamefont {Krauth}}, \ and\ \bibinfo
  {author} {\bibfnamefont {M.~J.}\ \bibnamefont {Rozenberg}},\ }\bibinfo
  {title} {\emph {Dynamical mean-field theory of strongly correlated fermion
  systems and the limit of infinite dimensions}},\ \href@noop {} {\bibfield
  {journal} {\bibinfo  {journal} {Rev. Mod. Phys.}\ }\textbf {\bibinfo {volume}
  {68}},\ \bibinfo {pages} {13} (\bibinfo {year} {1996})}\BibitemShut {NoStop}%
\bibitem [{\citenamefont {Gull}\ \emph {et~al.}(2011)\citenamefont {Gull},
  \citenamefont {Millis}, \citenamefont {Lichtenstein}, \citenamefont
  {Rubtsov}, \citenamefont {Troyer},\ and\ \citenamefont
  {Werner}}]{gull2011continuous}%
  \BibitemOpen
  \bibfield  {author} {\bibinfo {author} {\bibfnamefont {E.}~\bibnamefont
  {Gull}}, \bibinfo {author} {\bibfnamefont {A.~J.}\ \bibnamefont {Millis}},
  \bibinfo {author} {\bibfnamefont {A.~I.}\ \bibnamefont {Lichtenstein}},
  \bibinfo {author} {\bibfnamefont {A.~N.}\ \bibnamefont {Rubtsov}}, \bibinfo
  {author} {\bibfnamefont {M.}~\bibnamefont {Troyer}}, \ and\ \bibinfo {author}
  {\bibfnamefont {P.}~\bibnamefont {Werner}},\ }\bibinfo {title} {\emph
  {Continuous-time Monte Carlo methods for quantum impurity models}},\
  \href@noop {} {\bibfield  {journal} {\bibinfo  {journal} {Rev. Mod. Phys.}\
  }\textbf {\bibinfo {volume} {83}},\ \bibinfo {pages} {349} (\bibinfo {year}
  {2011})}\BibitemShut {NoStop}%
\bibitem [{\citenamefont {Metzner}\ \emph {et~al.}(2012)\citenamefont
  {Metzner}, \citenamefont {Salmhofer}, \citenamefont {Honerkamp},
  \citenamefont {Meden},\ and\ \citenamefont
  {Sch{\"o}nhammer}}]{metzner2012functional}%
  \BibitemOpen
  \bibfield  {author} {\bibinfo {author} {\bibfnamefont {W.}~\bibnamefont
  {Metzner}}, \bibinfo {author} {\bibfnamefont {M.}~\bibnamefont {Salmhofer}},
  \bibinfo {author} {\bibfnamefont {C.}~\bibnamefont {Honerkamp}}, \bibinfo
  {author} {\bibfnamefont {V.}~\bibnamefont {Meden}}, \ and\ \bibinfo {author}
  {\bibfnamefont {K.}~\bibnamefont {Sch{\"o}nhammer}},\ }\bibinfo {title}
  {\emph {Functional renormalization group approach to correlated fermion
  systems}},\ \href@noop {} {\bibfield  {journal} {\bibinfo  {journal} {Rev.
  Mod. Phys.}\ }\textbf {\bibinfo {volume} {84}},\ \bibinfo {pages} {299}
  (\bibinfo {year} {2012})}\BibitemShut {NoStop}%
\bibitem [{\citenamefont {Schollw{\"o}ck}(2005)}]{schollwock2005density}%
  \BibitemOpen
  \bibfield  {author} {\bibinfo {author} {\bibfnamefont {U.}~\bibnamefont
  {Schollw{\"o}ck}},\ }\bibinfo {title} {\emph {The density-matrix
  renormalization group}},\ \href@noop {} {\bibfield  {journal} {\bibinfo
  {journal} {Rev. Mod. Phys.}\ }\textbf {\bibinfo {volume} {77}},\ \bibinfo
  {pages} {259} (\bibinfo {year} {2005})}\BibitemShut {NoStop}%
\bibitem [{\citenamefont {Schollw{\"o}ck}(2011)}]{schollwock2011density}%
  \BibitemOpen
  \bibfield  {author} {\bibinfo {author} {\bibfnamefont {U.}~\bibnamefont
  {Schollw{\"o}ck}},\ }\bibinfo {title} {\emph {The density-matrix
  renormalization group in the age of matrix product states}},\ \href@noop {}
  {\bibfield  {journal} {\bibinfo  {journal} {Ann. Phys.}\ }\textbf {\bibinfo
  {volume} {326}},\ \bibinfo {pages} {96} (\bibinfo {year} {2011})}\BibitemShut
  {NoStop}%
\bibitem [{\citenamefont {Alet}\ and\ \citenamefont
  {Laflorencie}(2018)}]{alet2018many}%
  \BibitemOpen
  \bibfield  {author} {\bibinfo {author} {\bibfnamefont {F.}~\bibnamefont
  {Alet}}\ and\ \bibinfo {author} {\bibfnamefont {N.}~\bibnamefont
  {Laflorencie}},\ }\bibinfo {title} {\emph {Many-body localization: an
  introduction and selected topics}},\ \href@noop {} {\bibfield  {journal}
  {\bibinfo  {journal} {Comptes Rendus Physique}\ } (\bibinfo {year}
  {2018})}\BibitemShut {NoStop}%
\bibitem [{\citenamefont {DeGrand}\ and\ \citenamefont
  {DeTar}(2006)}]{degrand2006lattice}%
  \BibitemOpen
  \bibfield  {author} {\bibinfo {author} {\bibfnamefont {T.}~\bibnamefont
  {DeGrand}}\ and\ \bibinfo {author} {\bibfnamefont {C.}~\bibnamefont
  {DeTar}},\ }\href@noop {} {\emph {\bibinfo {title} {Lattice methods for
  quantum chromodynamics}}}\ (\bibinfo  {publisher} {World Scientific},\
  \bibinfo {year} {2006})\BibitemShut {NoStop}%
\bibitem [{\citenamefont {Fannes}\ \emph {et~al.}(1992)\citenamefont {Fannes},
  \citenamefont {Nachtergaele},\ and\ \citenamefont
  {Werner}}]{fannes1992finitely}%
  \BibitemOpen
  \bibfield  {author} {\bibinfo {author} {\bibfnamefont {M.}~\bibnamefont
  {Fannes}}, \bibinfo {author} {\bibfnamefont {B.}~\bibnamefont
  {Nachtergaele}}, \ and\ \bibinfo {author} {\bibfnamefont {R.~F.}\
  \bibnamefont {Werner}},\ }\bibinfo {title} {\emph {Finitely correlated states
  on quantum spin chains}},\ \href@noop {} {\bibfield  {journal} {\bibinfo
  {journal} {Commun. Math. Phys.}\ }\textbf {\bibinfo {volume} {144}},\
  \bibinfo {pages} {443} (\bibinfo {year} {1992})}\BibitemShut {NoStop}%
\bibitem [{\citenamefont {Verstraete}\ \emph {et~al.}(2008)\citenamefont
  {Verstraete}, \citenamefont {Murg},\ and\ \citenamefont
  {Cirac}}]{verstraete2008matrix}%
  \BibitemOpen
  \bibfield  {author} {\bibinfo {author} {\bibfnamefont {F.}~\bibnamefont
  {Verstraete}}, \bibinfo {author} {\bibfnamefont {V.}~\bibnamefont {Murg}}, \
  and\ \bibinfo {author} {\bibfnamefont {J.~I.}\ \bibnamefont {Cirac}},\
  }\bibinfo {title} {\emph {Matrix product states, projected entangled pair
  states, and variational renormalization group methods for quantum spin
  systems}},\ \href@noop {} {\bibfield  {journal} {\bibinfo  {journal} {Adv.
  Phys.}\ }\textbf {\bibinfo {volume} {57}},\ \bibinfo {pages} {143} (\bibinfo
  {year} {2008})}\BibitemShut {NoStop}%
\bibitem [{\citenamefont {Or{\'u}s}(2014)}]{orus2014practical}%
  \BibitemOpen
  \bibfield  {author} {\bibinfo {author} {\bibfnamefont {R.}~\bibnamefont
  {Or{\'u}s}},\ }\bibinfo {title} {\emph {A practical introduction to tensor
  networks: Matrix product states and projected entangled pair states}},\
  \href@noop {} {\bibfield  {journal} {\bibinfo  {journal} {Ann. Phys.}\
  }\textbf {\bibinfo {volume} {349}},\ \bibinfo {pages} {117} (\bibinfo {year}
  {2014})}\BibitemShut {NoStop}%
\bibitem [{\citenamefont {Bridgeman}\ and\ \citenamefont
  {Chubb}(2017)}]{bridgeman2017hand}%
  \BibitemOpen
  \bibfield  {author} {\bibinfo {author} {\bibfnamefont {J.~C.}\ \bibnamefont
  {Bridgeman}}\ and\ \bibinfo {author} {\bibfnamefont {C.~T.}\ \bibnamefont
  {Chubb}},\ }\bibinfo {title} {\emph {Hand-waving and interpretive dance: an
  introductory course on tensor networks}},\ \href@noop {} {\bibfield
  {journal} {\bibinfo  {journal} {J. Phys. A}\ }\textbf {\bibinfo {volume}
  {50}},\ \bibinfo {pages} {223001} (\bibinfo {year} {2017})}\BibitemShut
  {NoStop}%
\bibitem [{\citenamefont {Eisert}\ \emph {et~al.}(2010)\citenamefont {Eisert},
  \citenamefont {Cramer},\ and\ \citenamefont {Plenio}}]{AreaReview}%
  \BibitemOpen
  \bibfield  {author} {\bibinfo {author} {\bibfnamefont {J.}~\bibnamefont
  {Eisert}}, \bibinfo {author} {\bibfnamefont {M.}~\bibnamefont {Cramer}}, \
  and\ \bibinfo {author} {\bibfnamefont {M.~B.}\ \bibnamefont {Plenio}},\
  }\bibinfo {title} {\emph {Area laws for the entanglement entropy}},\ \href
  {\doibase 10.1103/RevModPhys.82.277} {\bibfield  {journal} {\bibinfo
  {journal} {Rev. Mod. Phys.}\ }\textbf {\bibinfo {volume} {82}},\ \bibinfo
  {pages} {277} (\bibinfo {year} {2010})}\BibitemShut {NoStop}%
\bibitem [{\citenamefont {Verstraete}\ \emph {et~al.}(2004)\citenamefont
  {Verstraete}, \citenamefont {Garcia-Ripoll},\ and\ \citenamefont
  {Cirac}}]{verstraete2004matrix}%
  \BibitemOpen
  \bibfield  {author} {\bibinfo {author} {\bibfnamefont {F.}~\bibnamefont
  {Verstraete}}, \bibinfo {author} {\bibfnamefont {J.~J.}\ \bibnamefont
  {Garcia-Ripoll}}, \ and\ \bibinfo {author} {\bibfnamefont {J.~I.}\
  \bibnamefont {Cirac}},\ }\bibinfo {title} {\emph {Matrix product density
  operators: Simulation of finite-temperature and dissipative systems}},\
  \href@noop {} {\bibfield  {journal} {\bibinfo  {journal} {Phys. Rev. Lett.}\
  }\textbf {\bibinfo {volume} {93}},\ \bibinfo {pages} {207204} (\bibinfo
  {year} {2004})}\BibitemShut {NoStop}%
\bibitem [{\citenamefont {Gobert}\ \emph {et~al.}(2005)\citenamefont {Gobert},
  \citenamefont {Kollath}, \citenamefont {Schollw{\"o}ck},\ and\ \citenamefont
  {Sch{\"u}tz}}]{gobert2005real}%
  \BibitemOpen
  \bibfield  {author} {\bibinfo {author} {\bibfnamefont {D.}~\bibnamefont
  {Gobert}}, \bibinfo {author} {\bibfnamefont {C.}~\bibnamefont {Kollath}},
  \bibinfo {author} {\bibfnamefont {U.}~\bibnamefont {Schollw{\"o}ck}}, \ and\
  \bibinfo {author} {\bibfnamefont {G.}~\bibnamefont {Sch{\"u}tz}},\ }\bibinfo
  {title} {\emph {Real-time dynamics in spin-1/2 chains with adaptive
  time-dependent density matrix renormalization group}},\ \href@noop {}
  {\bibfield  {journal} {\bibinfo  {journal} {Phys. Rev. E}\ }\textbf {\bibinfo
  {volume} {71}},\ \bibinfo {pages} {036102} (\bibinfo {year}
  {2005})}\BibitemShut {NoStop}%
\bibitem [{\citenamefont {Verstraete}\ and\ \citenamefont
  {Cirac}(2006)}]{verstraete2006matrix}%
  \BibitemOpen
  \bibfield  {author} {\bibinfo {author} {\bibfnamefont {F.}~\bibnamefont
  {Verstraete}}\ and\ \bibinfo {author} {\bibfnamefont {J.~I.}\ \bibnamefont
  {Cirac}},\ }\bibinfo {title} {\emph {Matrix product states represent ground
  states faithfully}},\ \href@noop {} {\bibfield  {journal} {\bibinfo
  {journal} {Phys. Rev. B}\ }\textbf {\bibinfo {volume} {73}},\ \bibinfo
  {pages} {094423} (\bibinfo {year} {2006})}\BibitemShut {NoStop}%
\bibitem [{\citenamefont {Manmana}\ \emph {et~al.}(2007)\citenamefont
  {Manmana}, \citenamefont {Wessel}, \citenamefont {Noack},\ and\ \citenamefont
  {Muramatsu}}]{manmana2007strongly}%
  \BibitemOpen
  \bibfield  {author} {\bibinfo {author} {\bibfnamefont {S.~R.}\ \bibnamefont
  {Manmana}}, \bibinfo {author} {\bibfnamefont {S.}~\bibnamefont {Wessel}},
  \bibinfo {author} {\bibfnamefont {R.~M.}\ \bibnamefont {Noack}}, \ and\
  \bibinfo {author} {\bibfnamefont {A.}~\bibnamefont {Muramatsu}},\ }\bibinfo
  {title} {\emph {Strongly correlated fermions after a quantum quench}},\
  \href@noop {} {\bibfield  {journal} {\bibinfo  {journal} {Phys. Rev. Lett.}\
  }\textbf {\bibinfo {volume} {98}},\ \bibinfo {pages} {210405} (\bibinfo
  {year} {2007})}\BibitemShut {NoStop}%
\bibitem [{\citenamefont {Trotzky}\ \emph {et~al.}(2012)\citenamefont
  {Trotzky}, \citenamefont {Chen}, \citenamefont {Flesch}, \citenamefont
  {McCulloch}, \citenamefont {Schollw{\"o}ck}, \citenamefont {Eisert},\ and\
  \citenamefont {Bloch}}]{trotzky2012probing}%
  \BibitemOpen
  \bibfield  {author} {\bibinfo {author} {\bibfnamefont {S.}~\bibnamefont
  {Trotzky}}, \bibinfo {author} {\bibfnamefont {Y.-A.}\ \bibnamefont {Chen}},
  \bibinfo {author} {\bibfnamefont {A.}~\bibnamefont {Flesch}}, \bibinfo
  {author} {\bibfnamefont {I.~P.}\ \bibnamefont {McCulloch}}, \bibinfo {author}
  {\bibfnamefont {U.}~\bibnamefont {Schollw{\"o}ck}}, \bibinfo {author}
  {\bibfnamefont {J.}~\bibnamefont {Eisert}}, \ and\ \bibinfo {author}
  {\bibfnamefont {I.}~\bibnamefont {Bloch}},\ }\bibinfo {title} {\emph {Probing
  the relaxation towards equilibrium in an isolated strongly correlated
  one-dimensional Bose gas}},\ \href@noop {} {\bibfield  {journal} {\bibinfo
  {journal} {Nature Physics}\ }\textbf {\bibinfo {volume} {8}},\ \bibinfo
  {pages} {325} (\bibinfo {year} {2012})}\BibitemShut {NoStop}%
\bibitem [{\citenamefont {Fukuhara}\ \emph {et~al.}(2013)\citenamefont
  {Fukuhara}, \citenamefont {Kantian}, \citenamefont {Endres}, \citenamefont
  {Cheneau}, \citenamefont {Schau{\ss}}, \citenamefont {Hild}, \citenamefont
  {Bellem}, \citenamefont {Schollw{\"o}ck}, \citenamefont {Giamarchi},
  \citenamefont {Gross} \emph {et~al.}}]{fukuhara2013quantum}%
  \BibitemOpen
  \bibfield  {author} {\bibinfo {author} {\bibfnamefont {T.}~\bibnamefont
  {Fukuhara}}, \bibinfo {author} {\bibfnamefont {A.}~\bibnamefont {Kantian}},
  \bibinfo {author} {\bibfnamefont {M.}~\bibnamefont {Endres}}, \bibinfo
  {author} {\bibfnamefont {M.}~\bibnamefont {Cheneau}}, \bibinfo {author}
  {\bibfnamefont {P.}~\bibnamefont {Schau{\ss}}}, \bibinfo {author}
  {\bibfnamefont {S.}~\bibnamefont {Hild}}, \bibinfo {author} {\bibfnamefont
  {D.}~\bibnamefont {Bellem}}, \bibinfo {author} {\bibfnamefont
  {U.}~\bibnamefont {Schollw{\"o}ck}}, \bibinfo {author} {\bibfnamefont
  {T.}~\bibnamefont {Giamarchi}}, \bibinfo {author} {\bibfnamefont
  {C.}~\bibnamefont {Gross}},  \emph {et~al.},\ }\bibinfo {title} {\emph
  {Quantum dynamics of a mobile spin impurity}},\ \href@noop {} {\bibfield
  {journal} {\bibinfo  {journal} {Nature Physics}\ }\textbf {\bibinfo {volume}
  {9}},\ \bibinfo {pages} {235} (\bibinfo {year} {2013})}\BibitemShut {NoStop}%
\bibitem [{\citenamefont {Pollmann}\ \emph {et~al.}(2010)\citenamefont
  {Pollmann}, \citenamefont {Turner}, \citenamefont {Berg},\ and\ \citenamefont
  {Oshikawa}}]{PhysRevB.81.064439}%
  \BibitemOpen
  \bibfield  {author} {\bibinfo {author} {\bibfnamefont {F.}~\bibnamefont
  {Pollmann}}, \bibinfo {author} {\bibfnamefont {A.~M.}\ \bibnamefont
  {Turner}}, \bibinfo {author} {\bibfnamefont {E.}~\bibnamefont {Berg}}, \ and\
  \bibinfo {author} {\bibfnamefont {M.}~\bibnamefont {Oshikawa}},\ }\bibinfo
  {title} {\emph {Entanglement spectrum of a topological phase in one
  dimension}},\ \href {\doibase 10.1103/PhysRevB.81.064439} {\bibfield
  {journal} {\bibinfo  {journal} {Phys. Rev. B}\ }\textbf {\bibinfo {volume}
  {81}},\ \bibinfo {pages} {064439} (\bibinfo {year} {2010})}\BibitemShut
  {NoStop}%
\bibitem [{\citenamefont {Bardarson}\ \emph {et~al.}(2012)\citenamefont
  {Bardarson}, \citenamefont {Pollmann},\ and\ \citenamefont
  {Moore}}]{bardarson2012unbounded}%
  \BibitemOpen
  \bibfield  {author} {\bibinfo {author} {\bibfnamefont {J.~H.}\ \bibnamefont
  {Bardarson}}, \bibinfo {author} {\bibfnamefont {F.}~\bibnamefont {Pollmann}},
  \ and\ \bibinfo {author} {\bibfnamefont {J.~E.}\ \bibnamefont {Moore}},\
  }\bibinfo {title} {\emph {Unbounded growth of entanglement in models of
  many-body localization}},\ \href@noop {} {\bibfield  {journal} {\bibinfo
  {journal} {Phys. Rev. Lett.}\ }\textbf {\bibinfo {volume} {109}},\ \bibinfo
  {pages} {017202} (\bibinfo {year} {2012})}\BibitemShut {NoStop}%
\bibitem [{\citenamefont {Karrasch}\ \emph {et~al.}(2012)\citenamefont
  {Karrasch}, \citenamefont {Bardarson},\ and\ \citenamefont
  {Moore}}]{karrasch2012finite}%
  \BibitemOpen
  \bibfield  {author} {\bibinfo {author} {\bibfnamefont {C.}~\bibnamefont
  {Karrasch}}, \bibinfo {author} {\bibfnamefont {J.}~\bibnamefont {Bardarson}},
  \ and\ \bibinfo {author} {\bibfnamefont {J.}~\bibnamefont {Moore}},\
  }\bibinfo {title} {\emph {Finite-temperature dynamical density matrix
  renormalization group and the drude weight of spin-1/2 chains}},\ \href@noop
  {} {\bibfield  {journal} {\bibinfo  {journal} {Phys. Rev. Lett.}\ }\textbf
  {\bibinfo {volume} {108}},\ \bibinfo {pages} {227206} (\bibinfo {year}
  {2012})}\BibitemShut {NoStop}%
\bibitem [{\citenamefont {Prosen}(2011)}]{PhysRevLett.106.217206}%
  \BibitemOpen
  \bibfield  {author} {\bibinfo {author} {\bibfnamefont {T.}~\bibnamefont
  {Prosen}},\ }\bibinfo {title} {\emph {Open $XXZ$ spin chain: Nonequilibrium
  steady state and a strict bound on ballistic transport}},\ \href {\doibase
  10.1103/PhysRevLett.106.217206} {\bibfield  {journal} {\bibinfo  {journal}
  {Phys. Rev. Lett.}\ }\textbf {\bibinfo {volume} {106}},\ \bibinfo {pages}
  {217206} (\bibinfo {year} {2011})}\BibitemShut {NoStop}%
\bibitem [{\citenamefont {Ronzheimer}\ \emph {et~al.}(2013)\citenamefont
  {Ronzheimer}, \citenamefont {Schreiber}, \citenamefont {Braun}, \citenamefont
  {Hodgman}, \citenamefont {Langer}, \citenamefont {McCulloch}, \citenamefont
  {Heidrich-Meisner}, \citenamefont {Bloch},\ and\ \citenamefont
  {Schneider}}]{ronzheimer2013expansion}%
  \BibitemOpen
  \bibfield  {author} {\bibinfo {author} {\bibfnamefont {J.~P.}\ \bibnamefont
  {Ronzheimer}}, \bibinfo {author} {\bibfnamefont {M.}~\bibnamefont
  {Schreiber}}, \bibinfo {author} {\bibfnamefont {S.}~\bibnamefont {Braun}},
  \bibinfo {author} {\bibfnamefont {S.~S.}\ \bibnamefont {Hodgman}}, \bibinfo
  {author} {\bibfnamefont {S.}~\bibnamefont {Langer}}, \bibinfo {author}
  {\bibfnamefont {I.~P.}\ \bibnamefont {McCulloch}}, \bibinfo {author}
  {\bibfnamefont {F.}~\bibnamefont {Heidrich-Meisner}}, \bibinfo {author}
  {\bibfnamefont {I.}~\bibnamefont {Bloch}}, \ and\ \bibinfo {author}
  {\bibfnamefont {U.}~\bibnamefont {Schneider}},\ }\bibinfo {title} {\emph
  {Expansion dynamics of interacting bosons in homogeneous lattices in one and
  two dimensions}},\ \href@noop {} {\bibfield  {journal} {\bibinfo  {journal}
  {Phys. Rev. Lett.}\ }\textbf {\bibinfo {volume} {110}},\ \bibinfo {pages}
  {205301} (\bibinfo {year} {2013})}\BibitemShut {NoStop}%
\bibitem [{\citenamefont {Arad}\ \emph {et~al.}(2017)\citenamefont {Arad},
  \citenamefont {Landau}, \citenamefont {Vazirani},\ and\ \citenamefont
  {Vidick}}]{arad2017rigorous}%
  \BibitemOpen
  \bibfield  {author} {\bibinfo {author} {\bibfnamefont {I.}~\bibnamefont
  {Arad}}, \bibinfo {author} {\bibfnamefont {Z.}~\bibnamefont {Landau}},
  \bibinfo {author} {\bibfnamefont {U.}~\bibnamefont {Vazirani}}, \ and\
  \bibinfo {author} {\bibfnamefont {T.}~\bibnamefont {Vidick}},\ }\bibinfo
  {title} {\emph {Rigorous RG algorithms and area laws for low energy
  eigenstates in 1D}},\ \href@noop {} {\bibfield  {journal} {\bibinfo
  {journal} {Commun. Math. Phys.}\ }\textbf {\bibinfo {volume} {356}},\
  \bibinfo {pages} {65} (\bibinfo {year} {2017})}\BibitemShut {NoStop}%
\bibitem [{\citenamefont {Perez-Garcia}\ \emph {et~al.}(2007)\citenamefont
  {Perez-Garcia}, \citenamefont {Verstraete}, \citenamefont {Wolf},\ and\
  \citenamefont {Cirac}}]{perez2006matrix}%
  \BibitemOpen
  \bibfield  {author} {\bibinfo {author} {\bibfnamefont {D.}~\bibnamefont
  {Perez-Garcia}}, \bibinfo {author} {\bibfnamefont {F.}~\bibnamefont
  {Verstraete}}, \bibinfo {author} {\bibfnamefont {M.~M.}\ \bibnamefont
  {Wolf}}, \ and\ \bibinfo {author} {\bibfnamefont {J.~I.}\ \bibnamefont
  {Cirac}},\ }\bibinfo {title} {\emph {Matrix product state representations}},\
  \href@noop {} {\bibfield  {journal} {\bibinfo  {journal} {Quantum Inf.
  Comput.}\ }\textbf {\bibinfo {volume} {7}},\ \bibinfo {pages} {401} (\bibinfo
  {year} {2007})}\BibitemShut {NoStop}%
\bibitem [{\citenamefont {Verstraete}\ \emph {et~al.}(2006)\citenamefont
  {Verstraete}, \citenamefont {Wolf}, \citenamefont {Perez-Garcia},\ and\
  \citenamefont {Cirac}}]{verstraete2006criticality}%
  \BibitemOpen
  \bibfield  {author} {\bibinfo {author} {\bibfnamefont {F.}~\bibnamefont
  {Verstraete}}, \bibinfo {author} {\bibfnamefont {M.~M.}\ \bibnamefont
  {Wolf}}, \bibinfo {author} {\bibfnamefont {D.}~\bibnamefont {Perez-Garcia}},
  \ and\ \bibinfo {author} {\bibfnamefont {J.~I.}\ \bibnamefont {Cirac}},\
  }\bibinfo {title} {\emph {Criticality, the area law, and the computational
  power of projected entangled pair states}},\ \href@noop {} {\bibfield
  {journal} {\bibinfo  {journal} {Phys. Rev. Lett.}\ }\textbf {\bibinfo
  {volume} {96}},\ \bibinfo {pages} {220601} (\bibinfo {year}
  {2006})}\BibitemShut {NoStop}%
\bibitem [{\citenamefont {Schuch}\ \emph {et~al.}(2007)\citenamefont {Schuch},
  \citenamefont {Wolf}, \citenamefont {Verstraete},\ and\ \citenamefont
  {Cirac}}]{schuch_hardness_2007}%
  \BibitemOpen
  \bibfield  {author} {\bibinfo {author} {\bibfnamefont {N.}~\bibnamefont
  {Schuch}}, \bibinfo {author} {\bibfnamefont {M.~M.}\ \bibnamefont {Wolf}},
  \bibinfo {author} {\bibfnamefont {F.}~\bibnamefont {Verstraete}}, \ and\
  \bibinfo {author} {\bibfnamefont {J.~I.}\ \bibnamefont {Cirac}},\ }\bibinfo
  {title} {\emph {The computational complexity of PEPS}},\ \href@noop {}
  {\bibfield  {journal} {\bibinfo  {journal} {Phys. Rev. Lett.}\ }\textbf
  {\bibinfo {volume} {98}},\ \bibinfo {pages} {140506} (\bibinfo {year}
  {2007})}\BibitemShut {NoStop}%
\bibitem [{\citenamefont {Arora}\ and\ \citenamefont
  {Barak}(2009)}]{arora2009computational}%
  \BibitemOpen
  \bibfield  {author} {\bibinfo {author} {\bibfnamefont {S.}~\bibnamefont
  {Arora}}\ and\ \bibinfo {author} {\bibfnamefont {B.}~\bibnamefont {Barak}},\
  }\href@noop {} {\emph {\bibinfo {title} {Computational complexity: a modern
  approach}}}\ (\bibinfo  {publisher} {Cambridge University Press},\ \bibinfo
  {year} {2009})\BibitemShut {NoStop}%
\bibitem [{\citenamefont {Nielsen}\ and\ \citenamefont
  {Chuang}(2002)}]{nielsen2002quantum}%
  \BibitemOpen
  \bibfield  {author} {\bibinfo {author} {\bibfnamefont {M.~A.}\ \bibnamefont
  {Nielsen}}\ and\ \bibinfo {author} {\bibfnamefont {I.}~\bibnamefont
  {Chuang}},\ }\href@noop {} {\bibinfo {title} {\emph {Quantum computation and
  quantum information}},\ } (\bibinfo {year} {2002})\BibitemShut {NoStop}%
\bibitem [{\citenamefont {Aaronson}(2013)}]{aaronson2013quantum}%
  \BibitemOpen
  \bibfield  {author} {\bibinfo {author} {\bibfnamefont {S.}~\bibnamefont
  {Aaronson}},\ }\href@noop {} {\emph {\bibinfo {title} {Quantum computing
  since Democritus}}}\ (\bibinfo  {publisher} {Cambridge University Press},\
  \bibinfo {year} {2013})\BibitemShut {NoStop}%
\bibitem [{\citenamefont {Lokshtanov}\ \emph {et~al.}(2013)\citenamefont
  {Lokshtanov}, \citenamefont {Marx}, \citenamefont {Saurabh} \emph
  {et~al.}}]{lokshtanov2013lower}%
  \BibitemOpen
  \bibfield  {author} {\bibinfo {author} {\bibfnamefont {D.}~\bibnamefont
  {Lokshtanov}}, \bibinfo {author} {\bibfnamefont {D.}~\bibnamefont {Marx}},
  \bibinfo {author} {\bibfnamefont {S.}~\bibnamefont {Saurabh}},  \emph
  {et~al.},\ }\bibinfo {title} {\emph {Lower bounds based on the exponential
  time hypothesis}},\ \href@noop {} {\bibfield  {journal} {\bibinfo  {journal}
  {Bulletin of EATCS}\ }\textbf {\bibinfo {volume} {3}} (\bibinfo {year}
  {2013})}\BibitemShut {NoStop}%
\bibitem [{\citenamefont {Deutsch}(1985)}]{deutsch1985quantum}%
  \BibitemOpen
  \bibfield  {author} {\bibinfo {author} {\bibfnamefont {D.}~\bibnamefont
  {Deutsch}},\ }\bibinfo {title} {\emph {Quantum theory, the Church--Turing
  principle and the universal quantum computer}},\ \href@noop {} {\bibfield
  {journal} {\bibinfo  {journal} {Proc. R. Soc. Lond. A}\ }\textbf {\bibinfo
  {volume} {400}},\ \bibinfo {pages} {97} (\bibinfo {year} {1985})}\BibitemShut
  {NoStop}%
\bibitem [{\citenamefont {Barahona}(1982)}]{barahona1982computational}%
  \BibitemOpen
  \bibfield  {author} {\bibinfo {author} {\bibfnamefont {F.}~\bibnamefont
  {Barahona}},\ }\bibinfo {title} {\emph {On the computational complexity of
  Ising spin glass models}},\ \href@noop {} {\bibfield  {journal} {\bibinfo
  {journal} {J. Phys. A}\ }\textbf {\bibinfo {volume} {15}},\ \bibinfo {pages}
  {3241} (\bibinfo {year} {1982})}\BibitemShut {NoStop}%
\bibitem [{\citenamefont {Feigenbaum}\ and\ \citenamefont
  {Fortnow}(1991)}]{feigenbaum_selfreducible_1991}%
  \BibitemOpen
  \bibfield  {author} {\bibinfo {author} {\bibfnamefont {J.}~\bibnamefont
  {Feigenbaum}}\ and\ \bibinfo {author} {\bibfnamefont {L.}~\bibnamefont
  {Fortnow}},\ }\bibinfo {title} {\emph {On the random-self-reducibility of
  complete sets}},\ \href@noop {} {\bibfield  {journal} {\bibinfo  {journal}
  {Proceedings of the Sixth Annual Structure in Complexity Theory}\ } (\bibinfo
  {year} {1991})}\BibitemShut {NoStop}%
\bibitem [{\citenamefont {Aaronson}\ and\ \citenamefont
  {Arkhipov}(2013)}]{aaronson_bosonsampling_2010}%
  \BibitemOpen
  \bibfield  {author} {\bibinfo {author} {\bibfnamefont {S.}~\bibnamefont
  {Aaronson}}\ and\ \bibinfo {author} {\bibfnamefont {A.}~\bibnamefont
  {Arkhipov}},\ }\bibinfo {title} {\emph {The Computational Complexity of
  Linear Optics}},\ \href@noop {} {\bibfield  {journal} {\bibinfo  {journal}
  {Provc. Ann. ACM. Syp. Th. Comp.}\ }\textbf {\bibinfo {volume} {9}},\
  \bibinfo {pages} {143} (\bibinfo {year} {2013})}\BibitemShut {NoStop}%
\bibitem [{\citenamefont {Bouland}\ \emph {et~al.}(2018)\citenamefont
  {Bouland}, \citenamefont {Feffermann}, \citenamefont {Nirkhe},\ and\
  \citenamefont {Vazirani}}]{bouland_quantum_2018}%
  \BibitemOpen
  \bibfield  {author} {\bibinfo {author} {\bibfnamefont {A.}~\bibnamefont
  {Bouland}}, \bibinfo {author} {\bibfnamefont {B.}~\bibnamefont {Feffermann}},
  \bibinfo {author} {\bibfnamefont {C.}~\bibnamefont {Nirkhe}}, \ and\ \bibinfo
  {author} {\bibfnamefont {U.}~\bibnamefont {Vazirani}},\ }\bibinfo {title}
  {\emph {Quantum Supremacy and the Complexity of Random Circuit Sampling}},\
  \href@noop {} {\  (\bibinfo {year} {2018})},\ \Eprint
  {http://arxiv.org/abs/arXiv:1803.04402} {arXiv:1803.04402}\BibitemShut
  {NoStop}%
\bibitem [{\citenamefont {Shi}\ \emph {et~al.}(2006)\citenamefont {Shi},
  \citenamefont {Duan},\ and\ \citenamefont {Vidal}}]{PhysRevA.74.022320}%
  \BibitemOpen
  \bibfield  {author} {\bibinfo {author} {\bibfnamefont {Y.-Y.}\ \bibnamefont
  {Shi}}, \bibinfo {author} {\bibfnamefont {L.-M.}\ \bibnamefont {Duan}}, \
  and\ \bibinfo {author} {\bibfnamefont {G.}~\bibnamefont {Vidal}},\ }\bibinfo
  {title} {\emph {Classical simulation of quantum many-body systems with a tree
  tensor network}},\ \href {\doibase 10.1103/PhysRevA.74.022320} {\bibfield
  {journal} {\bibinfo  {journal} {Phys. Rev. A}\ }\textbf {\bibinfo {volume}
  {74}},\ \bibinfo {pages} {022320} (\bibinfo {year} {2006})}\BibitemShut
  {NoStop}%
\bibitem [{\citenamefont {Anshu}\ \emph {et~al.}(2016)\citenamefont {Anshu},
  \citenamefont {Arad},\ and\ \citenamefont {Jain}}]{anshu2016local}%
  \BibitemOpen
  \bibfield  {author} {\bibinfo {author} {\bibfnamefont {A.}~\bibnamefont
  {Anshu}}, \bibinfo {author} {\bibfnamefont {I.}~\bibnamefont {Arad}}, \ and\
  \bibinfo {author} {\bibfnamefont {A.}~\bibnamefont {Jain}},\ }\bibinfo
  {title} {\emph {How local is the information in tensor networks of matrix
  product states or projected entangled pairs states}},\ \href@noop {}
  {\bibfield  {journal} {\bibinfo  {journal} {Phys. Rev. B}\ }\textbf {\bibinfo
  {volume} {94}},\ \bibinfo {pages} {195143} (\bibinfo {year}
  {2016})}\BibitemShut {NoStop}%
\bibitem [{\citenamefont {Schwarz}\ \emph {et~al.}(2017)\citenamefont
  {Schwarz}, \citenamefont {Buerschaper},\ and\ \citenamefont
  {Eisert}}]{schwarz_PEPS_2017}%
  \BibitemOpen
  \bibfield  {author} {\bibinfo {author} {\bibfnamefont {M.}~\bibnamefont
  {Schwarz}}, \bibinfo {author} {\bibfnamefont {O.}~\bibnamefont
  {Buerschaper}}, \ and\ \bibinfo {author} {\bibfnamefont {J.}~\bibnamefont
  {Eisert}},\ }\bibinfo {title} {\emph {Approximating local observables on
  projected entangled pair states}},\ \href@noop {} {\bibfield  {journal}
  {\bibinfo  {journal} {Phys. Rev. A}\ }\textbf {\bibinfo {volume} {95}},\
  \bibinfo {pages} {060102(R)} (\bibinfo {year} {2017})}\BibitemShut {NoStop}%
\bibitem [{\citenamefont {Lubasch}\ \emph {et~al.}(2014)\citenamefont
  {Lubasch}, \citenamefont {Cirac},\ and\ \citenamefont
  {Banuls}}]{FinitePEPScontractions}%
  \BibitemOpen
  \bibfield  {author} {\bibinfo {author} {\bibfnamefont {M.}~\bibnamefont
  {Lubasch}}, \bibinfo {author} {\bibfnamefont {J.~I.}\ \bibnamefont {Cirac}},
  \ and\ \bibinfo {author} {\bibfnamefont {M.-C.}\ \bibnamefont {Banuls}},\
  }\bibinfo {title} {\emph {Unifying projected entangled pair states
  contractions}},\ \href@noop {} {\bibfield  {journal} {\bibinfo  {journal}
  {New J. Phys.}\ }\textbf {\bibinfo {volume} {16}},\ \bibinfo {pages} {033014}
  (\bibinfo {year} {2014})}\BibitemShut {NoStop}%
\bibitem [{\citenamefont {Levin}\ and\ \citenamefont
  {Nave}(2007)}]{levin2007tensor}%
  \BibitemOpen
  \bibfield  {author} {\bibinfo {author} {\bibfnamefont {M.}~\bibnamefont
  {Levin}}\ and\ \bibinfo {author} {\bibfnamefont {C.~P.}\ \bibnamefont
  {Nave}},\ }\bibinfo {title} {\emph {Tensor renormalization group approach to
  two-dimensional classical lattice models}},\ \href@noop {} {\bibfield
  {journal} {\bibinfo  {journal} {Phys. Rev. Lett.}\ }\textbf {\bibinfo
  {volume} {99}},\ \bibinfo {pages} {120601} (\bibinfo {year}
  {2007})}\BibitemShut {NoStop}%
\bibitem [{\citenamefont {Gu}\ and\ \citenamefont
  {Wen}(2009)}]{PhysRevB.80.155131}%
  \BibitemOpen
  \bibfield  {author} {\bibinfo {author} {\bibfnamefont {Z.-C.}\ \bibnamefont
  {Gu}}\ and\ \bibinfo {author} {\bibfnamefont {X.-G.}\ \bibnamefont {Wen}},\
  }\bibinfo {title} {\emph {Tensor-entanglement-filtering renormalization
  approach and symmetry-protected topological order}},\ \href {\doibase
  10.1103/PhysRevB.80.155131} {\bibfield  {journal} {\bibinfo  {journal} {Phys.
  Rev. B}\ }\textbf {\bibinfo {volume} {80}},\ \bibinfo {pages} {155131}
  (\bibinfo {year} {2009})}\BibitemShut {NoStop}%
\bibitem [{\citenamefont {Xie}\ \emph {et~al.}(2009)\citenamefont {Xie},
  \citenamefont {Jiang}, \citenamefont {Chen}, \citenamefont {Weng},\ and\
  \citenamefont {Xiang}}]{PhysRevLett.103.160601}%
  \BibitemOpen
  \bibfield  {author} {\bibinfo {author} {\bibfnamefont {Z.~Y.}\ \bibnamefont
  {Xie}}, \bibinfo {author} {\bibfnamefont {H.~C.}\ \bibnamefont {Jiang}},
  \bibinfo {author} {\bibfnamefont {Q.~N.}\ \bibnamefont {Chen}}, \bibinfo
  {author} {\bibfnamefont {Z.~Y.}\ \bibnamefont {Weng}}, \ and\ \bibinfo
  {author} {\bibfnamefont {T.}~\bibnamefont {Xiang}},\ }\bibinfo {title} {\emph
  {Second renormalization of tensor-network states}},\ \href {\doibase
  10.1103/PhysRevLett.103.160601} {\bibfield  {journal} {\bibinfo  {journal}
  {Phys. Rev. Lett.}\ }\textbf {\bibinfo {volume} {103}},\ \bibinfo {pages}
  {160601} (\bibinfo {year} {2009})}\BibitemShut {NoStop}%
\bibitem [{\citenamefont {Zhao}\ \emph {et~al.}(2010)\citenamefont {Zhao},
  \citenamefont {Xie}, \citenamefont {Chen}, \citenamefont {Wei}, \citenamefont
  {Cai},\ and\ \citenamefont {Xiang}}]{PhysRevB.81.174411}%
  \BibitemOpen
  \bibfield  {author} {\bibinfo {author} {\bibfnamefont {H.~H.}\ \bibnamefont
  {Zhao}}, \bibinfo {author} {\bibfnamefont {Z.~Y.}\ \bibnamefont {Xie}},
  \bibinfo {author} {\bibfnamefont {Q.~N.}\ \bibnamefont {Chen}}, \bibinfo
  {author} {\bibfnamefont {Z.~C.}\ \bibnamefont {Wei}}, \bibinfo {author}
  {\bibfnamefont {J.~W.}\ \bibnamefont {Cai}}, \ and\ \bibinfo {author}
  {\bibfnamefont {T.}~\bibnamefont {Xiang}},\ }\bibinfo {title} {\emph
  {Renormalization of tensor-network states}},\ \href {\doibase
  10.1103/PhysRevB.81.174411} {\bibfield  {journal} {\bibinfo  {journal} {Phys.
  Rev. B}\ }\textbf {\bibinfo {volume} {81}},\ \bibinfo {pages} {174411}
  (\bibinfo {year} {2010})}\BibitemShut {NoStop}%
\bibitem [{\citenamefont {Yang}\ \emph {et~al.}(2017)\citenamefont {Yang},
  \citenamefont {Gu},\ and\ \citenamefont {Wen}}]{yang2017loop}%
  \BibitemOpen
  \bibfield  {author} {\bibinfo {author} {\bibfnamefont {S.}~\bibnamefont
  {Yang}}, \bibinfo {author} {\bibfnamefont {Z.-C.}\ \bibnamefont {Gu}}, \ and\
  \bibinfo {author} {\bibfnamefont {X.-G.}\ \bibnamefont {Wen}},\ }\bibinfo
  {title} {\emph {Loop optimization for tensor network renormalization}},\
  \href@noop {} {\bibfield  {journal} {\bibinfo  {journal} {Phys. Rev. Lett.}\
  }\textbf {\bibinfo {volume} {118}},\ \bibinfo {pages} {110504} (\bibinfo
  {year} {2017})}\BibitemShut {NoStop}%
\bibitem [{\citenamefont {Evenbly}(2017)}]{PhysRevB.95.045117}%
  \BibitemOpen
  \bibfield  {author} {\bibinfo {author} {\bibfnamefont {G.}~\bibnamefont
  {Evenbly}},\ }\bibinfo {title} {\emph {Algorithms for tensor network
  renormalization}},\ \href {\doibase 10.1103/PhysRevB.95.045117} {\bibfield
  {journal} {\bibinfo  {journal} {Phys. Rev. B}\ }\textbf {\bibinfo {volume}
  {95}},\ \bibinfo {pages} {045117} (\bibinfo {year} {2017})}\BibitemShut
  {NoStop}%
\bibitem [{\citenamefont {Verstraete}\ and\ \citenamefont
  {Cirac}(2004)}]{verstraete2004renormalization}%
  \BibitemOpen
  \bibfield  {author} {\bibinfo {author} {\bibfnamefont {F.}~\bibnamefont
  {Verstraete}}\ and\ \bibinfo {author} {\bibfnamefont {J.~I.}\ \bibnamefont
  {Cirac}},\ }\bibinfo {title} {\emph {Renormalization algorithms for
  quantum-many body systems in two and higher dimensions}},\ \href@noop {}
  {\bibfield  {journal} {\bibinfo  {journal} {arXiv preprint cond-mat/0407066}\
  } (\bibinfo {year} {2004})}\BibitemShut {NoStop}%
\bibitem [{\citenamefont {Or\'us}\ and\ \citenamefont
  {Vidal}(2008)}]{PhysRevB.78.155117}%
  \BibitemOpen
  \bibfield  {author} {\bibinfo {author} {\bibfnamefont {R.}~\bibnamefont
  {Or\'us}}\ and\ \bibinfo {author} {\bibfnamefont {G.}~\bibnamefont {Vidal}},\
  }\bibinfo {title} {\emph {Infinite time-evolving block decimation algorithm
  beyond unitary evolution}},\ \href {\doibase 10.1103/PhysRevB.78.155117}
  {\bibfield  {journal} {\bibinfo  {journal} {Phys. Rev. B}\ }\textbf {\bibinfo
  {volume} {78}},\ \bibinfo {pages} {155117} (\bibinfo {year}
  {2008})}\BibitemShut {NoStop}%
\bibitem [{\citenamefont {Vanderstraeten}\ \emph {et~al.}(2016)\citenamefont
  {Vanderstraeten}, \citenamefont {Haegeman}, \citenamefont {Corboz},\ and\
  \citenamefont {Verstraete}}]{vanderstraeten2016gradient}%
  \BibitemOpen
  \bibfield  {author} {\bibinfo {author} {\bibfnamefont {L.}~\bibnamefont
  {Vanderstraeten}}, \bibinfo {author} {\bibfnamefont {J.}~\bibnamefont
  {Haegeman}}, \bibinfo {author} {\bibfnamefont {P.}~\bibnamefont {Corboz}}, \
  and\ \bibinfo {author} {\bibfnamefont {F.}~\bibnamefont {Verstraete}},\
  }\bibinfo {title} {\emph {Gradient methods for variational optimization of
  projected entangled-pair states}},\ \href@noop {} {\bibfield  {journal}
  {\bibinfo  {journal} {Physical Review B}\ }\textbf {\bibinfo {volume} {94}},\
  \bibinfo {pages} {155123} (\bibinfo {year} {2016})}\BibitemShut {NoStop}%
\bibitem [{\citenamefont {Fishman}\ \emph {et~al.}(2017)\citenamefont
  {Fishman}, \citenamefont {Vanderstraeten}, \citenamefont {Zauner-Stauber},
  \citenamefont {Haegeman},\ and\ \citenamefont
  {Verstraete}}]{fishman2017faster}%
  \BibitemOpen
  \bibfield  {author} {\bibinfo {author} {\bibfnamefont {M.~T.}\ \bibnamefont
  {Fishman}}, \bibinfo {author} {\bibfnamefont {L.}~\bibnamefont
  {Vanderstraeten}}, \bibinfo {author} {\bibfnamefont {V.}~\bibnamefont
  {Zauner-Stauber}}, \bibinfo {author} {\bibfnamefont {J.}~\bibnamefont
  {Haegeman}}, \ and\ \bibinfo {author} {\bibfnamefont {F.}~\bibnamefont
  {Verstraete}},\ }\bibinfo {title} {\emph {Faster methods for contracting
  infinite 2d tensor networks}},\ \href@noop {} {\bibfield  {journal} {\bibinfo
   {journal} {arXiv preprint arXiv:1711.05881}\ } (\bibinfo {year}
  {2017})}\BibitemShut {NoStop}%
\bibitem [{\citenamefont {Nishino}(1995)}]{doi:10.1143/JPSJ.64.3598}%
  \BibitemOpen
  \bibfield  {author} {\bibinfo {author} {\bibfnamefont {T.}~\bibnamefont
  {Nishino}},\ }\bibinfo {title} {\emph {Density matrix renormalization group
  method for 2D classical models}},\ \href {\doibase 10.1143/JPSJ.64.3598}
  {\bibfield  {journal} {\bibinfo  {journal} {J. Phys. Soc. Jap.}\ }\textbf
  {\bibinfo {volume} {64}},\ \bibinfo {pages} {3598} (\bibinfo {year}
  {1995})},\ \Eprint
  {http://arxiv.org/abs/https://doi.org/10.1143/JPSJ.64.3598}
  {https://doi.org/10.1143/JPSJ.64.3598}\BibitemShut {NoStop}%
\bibitem [{\citenamefont {Nishino}\ \emph {et~al.}(1996)\citenamefont
  {Nishino}, \citenamefont {Okunishi},\ and\ \citenamefont
  {Kikuchi}}]{NISHINO199669}%
  \BibitemOpen
  \bibfield  {author} {\bibinfo {author} {\bibfnamefont {T.}~\bibnamefont
  {Nishino}}, \bibinfo {author} {\bibfnamefont {K.}~\bibnamefont {Okunishi}}, \
  and\ \bibinfo {author} {\bibfnamefont {M.}~\bibnamefont {Kikuchi}},\
  }\bibinfo {title} {\emph {Numerical renormalization group at criticality}},\
  \href {\doibase https://doi.org/10.1016/0375-9601(96)00128-4} {\bibfield
  {journal} {\bibinfo  {journal} {Physics Letters A}\ }\textbf {\bibinfo
  {volume} {213}},\ \bibinfo {pages} {69 } (\bibinfo {year}
  {1996})}\BibitemShut {NoStop}%
\bibitem [{\citenamefont {Nishino}\ and\ \citenamefont
  {Okunishi}(1997)}]{nishino1997corner}%
  \BibitemOpen
  \bibfield  {author} {\bibinfo {author} {\bibfnamefont {T.}~\bibnamefont
  {Nishino}}\ and\ \bibinfo {author} {\bibfnamefont {K.}~\bibnamefont
  {Okunishi}},\ }\bibinfo {title} {\emph {Corner transfer matrix algorithm for
  classical renormalization group}},\ \href@noop {} {\bibfield  {journal}
  {\bibinfo  {journal} {J. Phys. Soc. Jap.}\ }\textbf {\bibinfo {volume}
  {66}},\ \bibinfo {pages} {3040} (\bibinfo {year} {1997})}\BibitemShut
  {NoStop}%
\bibitem [{\citenamefont {Or{\'u}s}\ and\ \citenamefont
  {Vidal}(2009)}]{orus2009simulation}%
  \BibitemOpen
  \bibfield  {author} {\bibinfo {author} {\bibfnamefont {R.}~\bibnamefont
  {Or{\'u}s}}\ and\ \bibinfo {author} {\bibfnamefont {G.}~\bibnamefont
  {Vidal}},\ }\bibinfo {title} {\emph {Simulation of two-dimensional quantum
  systems on an infinite lattice revisited: Corner transfer matrix for tensor
  contraction}},\ \href@noop {} {\bibfield  {journal} {\bibinfo  {journal}
  {Phys. Rev. B}\ }\textbf {\bibinfo {volume} {80}},\ \bibinfo {pages} {094403}
  (\bibinfo {year} {2009})}\BibitemShut {NoStop}%
\bibitem [{\citenamefont {Or{\'u}s}(2012)}]{orus2012exploring}%
  \BibitemOpen
  \bibfield  {author} {\bibinfo {author} {\bibfnamefont {R.}~\bibnamefont
  {Or{\'u}s}},\ }\bibinfo {title} {\emph {Exploring corner transfer matrices
  and corner tensors for the classical simulation of quantum lattice
  systems}},\ \href@noop {} {\bibfield  {journal} {\bibinfo  {journal} {Phys.
  Rev. B}\ }\textbf {\bibinfo {volume} {85}},\ \bibinfo {pages} {205117}
  (\bibinfo {year} {2012})}\BibitemShut {NoStop}%
\bibitem [{\citenamefont {Kshetrimayum}\ \emph {et~al.}()\citenamefont
  {Kshetrimayum}, \citenamefont {Rizzi}, \citenamefont {Eisert},\ and\
  \citenamefont {Or\'us}}]{AugustineThermal}%
  \BibitemOpen
  \bibfield  {author} {\bibinfo {author} {\bibfnamefont {A.}~\bibnamefont
  {Kshetrimayum}}, \bibinfo {author} {\bibfnamefont {M.}~\bibnamefont {Rizzi}},
  \bibinfo {author} {\bibfnamefont {J.}~\bibnamefont {Eisert}}, \ and\ \bibinfo
  {author} {\bibfnamefont {R.}~\bibnamefont {Or\'us}},\ }\href@noop {}
  {\bibinfo {title} {\emph {A tensor network annealing algorithm for
  two-dimensional thermal states}},\ }\bibinfo {note}
  {ArXiv:1809.08258}\BibitemShut {NoStop}%
\bibitem [{\citenamefont {Markov}\ and\ \citenamefont
  {Shi}(2008)}]{doi:10.1137/050644756}%
  \BibitemOpen
  \bibfield  {author} {\bibinfo {author} {\bibfnamefont {I.}~\bibnamefont
  {Markov}}\ and\ \bibinfo {author} {\bibfnamefont {Y.}~\bibnamefont {Shi}},\
  }\bibinfo {title} {\emph {Simulating Quantum Computation by Contracting
  Tensor Networks}},\ \href {\doibase 10.1137/050644756} {\bibfield  {journal}
  {\bibinfo  {journal} {SIAM Journal on Computing}\ }\textbf {\bibinfo {volume}
  {38}},\ \bibinfo {pages} {963} (\bibinfo {year} {2008})},\ \Eprint
  {http://arxiv.org/abs/https://doi.org/10.1137/050644756}
  {https://doi.org/10.1137/050644756}\BibitemShut {NoStop}%
\bibitem [{\citenamefont {Arad}\ and\ \citenamefont
  {Landau}(2010)}]{arad2010quantum}%
  \BibitemOpen
  \bibfield  {author} {\bibinfo {author} {\bibfnamefont {I.}~\bibnamefont
  {Arad}}\ and\ \bibinfo {author} {\bibfnamefont {Z.}~\bibnamefont {Landau}},\
  }\bibinfo {title} {\emph {Quantum computation and the evaluation of tensor
  networks}},\ \href@noop {} {\bibfield  {journal} {\bibinfo  {journal} {SIAM
  Journal on Computing}\ }\textbf {\bibinfo {volume} {39}},\ \bibinfo {pages}
  {3089} (\bibinfo {year} {2010})}\BibitemShut {NoStop}%
\bibitem [{\citenamefont {Ran}\ \emph {et~al.}(2017)\citenamefont {Ran},
  \citenamefont {Tirrito}, \citenamefont {Peng}, \citenamefont {Chen},
  \citenamefont {Su},\ and\ \citenamefont {Lewenstein}}]{ran2017review}%
  \BibitemOpen
  \bibfield  {author} {\bibinfo {author} {\bibfnamefont {S.-J.}\ \bibnamefont
  {Ran}}, \bibinfo {author} {\bibfnamefont {E.}~\bibnamefont {Tirrito}},
  \bibinfo {author} {\bibfnamefont {C.}~\bibnamefont {Peng}}, \bibinfo {author}
  {\bibfnamefont {X.}~\bibnamefont {Chen}}, \bibinfo {author} {\bibfnamefont
  {G.}~\bibnamefont {Su}}, \ and\ \bibinfo {author} {\bibfnamefont
  {M.}~\bibnamefont {Lewenstein}},\ }\bibinfo {title} {\emph {Review of tensor
  network contraction approaches}},\ \href@noop {} {\bibfield  {journal}
  {\bibinfo  {journal} {arXiv preprint arXiv:1708.09213}\ } (\bibinfo {year}
  {2017})}\BibitemShut {NoStop}%
\bibitem [{\citenamefont {Corboz}\ \emph {et~al.}(2010)\citenamefont {Corboz},
  \citenamefont {Or{\'u}s}, \citenamefont {Bauer},\ and\ \citenamefont
  {Vidal}}]{corboz2010simulation}%
  \BibitemOpen
  \bibfield  {author} {\bibinfo {author} {\bibfnamefont {P.}~\bibnamefont
  {Corboz}}, \bibinfo {author} {\bibfnamefont {R.}~\bibnamefont {Or{\'u}s}},
  \bibinfo {author} {\bibfnamefont {B.}~\bibnamefont {Bauer}}, \ and\ \bibinfo
  {author} {\bibfnamefont {G.}~\bibnamefont {Vidal}},\ }\bibinfo {title} {\emph
  {Simulation of strongly correlated fermions in two spatial dimensions with
  fermionic projected entangled-pair states}},\ \href@noop {} {\bibfield
  {journal} {\bibinfo  {journal} {Phys. Rev. B}\ }\textbf {\bibinfo {volume}
  {81}},\ \bibinfo {pages} {165104} (\bibinfo {year} {2010})}\BibitemShut
  {NoStop}%
\bibitem [{\citenamefont {Corboz}\ \emph {et~al.}(2011)\citenamefont {Corboz},
  \citenamefont {White}, \citenamefont {Vidal},\ and\ \citenamefont
  {Troyer}}]{PhysRevB.84.041108}%
  \BibitemOpen
  \bibfield  {author} {\bibinfo {author} {\bibfnamefont {P.}~\bibnamefont
  {Corboz}}, \bibinfo {author} {\bibfnamefont {S.~R.}\ \bibnamefont {White}},
  \bibinfo {author} {\bibfnamefont {G.}~\bibnamefont {Vidal}}, \ and\ \bibinfo
  {author} {\bibfnamefont {M.}~\bibnamefont {Troyer}},\ }\bibinfo {title}
  {\emph {Stripes in the two-dimensional $t$-$J$ model with infinite projected
  entangled-pair states}},\ \href {\doibase 10.1103/PhysRevB.84.041108}
  {\bibfield  {journal} {\bibinfo  {journal} {Phys. Rev. B}\ }\textbf {\bibinfo
  {volume} {84}},\ \bibinfo {pages} {041108} (\bibinfo {year}
  {2011})}\BibitemShut {NoStop}%
\bibitem [{\citenamefont {Niesen}\ and\ \citenamefont
  {Corboz}(2017)}]{SciPostPhys.3.4.030}%
  \BibitemOpen
  \bibfield  {author} {\bibinfo {author} {\bibfnamefont {I.}~\bibnamefont
  {Niesen}}\ and\ \bibinfo {author} {\bibfnamefont {P.}~\bibnamefont
  {Corboz}},\ }\bibinfo {title} {\emph {{A tensor network study of the complete
  ground state phase diagram of the spin-1 bilinear-biquadratic Heisenberg
  model on the square lattice}}},\ \href {\doibase
  10.21468/SciPostPhys.3.4.030} {\bibfield  {journal} {\bibinfo  {journal}
  {SciPost Phys.}\ }\textbf {\bibinfo {volume} {3}},\ \bibinfo {pages} {030}
  (\bibinfo {year} {2017})}\BibitemShut {NoStop}%
\bibitem [{\citenamefont {Matsuda}\ \emph {et~al.}(2013)\citenamefont
  {Matsuda}, \citenamefont {Abe}, \citenamefont {Takeyama}, \citenamefont
  {Kageyama}, \citenamefont {Corboz}, \citenamefont {Honecker}, \citenamefont
  {Manmana}, \citenamefont {Foltin}, \citenamefont {Schmidt},\ and\
  \citenamefont {Mila}}]{PhysRevLett.111.137204}%
  \BibitemOpen
  \bibfield  {author} {\bibinfo {author} {\bibfnamefont {Y.~H.}\ \bibnamefont
  {Matsuda}}, \bibinfo {author} {\bibfnamefont {N.}~\bibnamefont {Abe}},
  \bibinfo {author} {\bibfnamefont {S.}~\bibnamefont {Takeyama}}, \bibinfo
  {author} {\bibfnamefont {H.}~\bibnamefont {Kageyama}}, \bibinfo {author}
  {\bibfnamefont {P.}~\bibnamefont {Corboz}}, \bibinfo {author} {\bibfnamefont
  {A.}~\bibnamefont {Honecker}}, \bibinfo {author} {\bibfnamefont {S.~R.}\
  \bibnamefont {Manmana}}, \bibinfo {author} {\bibfnamefont {G.~R.}\
  \bibnamefont {Foltin}}, \bibinfo {author} {\bibfnamefont {K.~P.}\
  \bibnamefont {Schmidt}}, \ and\ \bibinfo {author} {\bibfnamefont
  {F.}~\bibnamefont {Mila}},\ }\bibinfo {title} {\emph {Magnetization of
  ${\mathrm{SrCu}}_{2}({\mathrm{BO}}_{3}{)}_{2}$ in ultrahigh magnetic fields
  up to 118 T}},\ \href {\doibase 10.1103/PhysRevLett.111.137204} {\bibfield
  {journal} {\bibinfo  {journal} {Phys. Rev. Lett.}\ }\textbf {\bibinfo
  {volume} {111}},\ \bibinfo {pages} {137204} (\bibinfo {year}
  {2013})}\BibitemShut {NoStop}%
\bibitem [{\citenamefont {Kshetrimayum}\ \emph {et~al.}(2016)\citenamefont
  {Kshetrimayum}, \citenamefont {Picot}, \citenamefont {Or\'us},\ and\
  \citenamefont {Poilblanc}}]{PhysRevB.94.235146}%
  \BibitemOpen
  \bibfield  {author} {\bibinfo {author} {\bibfnamefont {A.}~\bibnamefont
  {Kshetrimayum}}, \bibinfo {author} {\bibfnamefont {T.}~\bibnamefont {Picot}},
  \bibinfo {author} {\bibfnamefont {R.}~\bibnamefont {Or\'us}}, \ and\ \bibinfo
  {author} {\bibfnamefont {D.}~\bibnamefont {Poilblanc}},\ }\bibinfo {title}
  {\emph {Spin-1/2 Kagome XXZ model in a field: Competition between lattice
  nematic and solid orders}},\ \href {\doibase 10.1103/PhysRevB.94.235146}
  {\bibfield  {journal} {\bibinfo  {journal} {Phys. Rev. B}\ }\textbf {\bibinfo
  {volume} {94}},\ \bibinfo {pages} {235146} (\bibinfo {year}
  {2016})}\BibitemShut {NoStop}%
\bibitem [{\citenamefont {Yan}\ \emph {et~al.}(2011)\citenamefont {Yan},
  \citenamefont {Huse},\ and\ \citenamefont {White}}]{yan2011spin}%
  \BibitemOpen
  \bibfield  {author} {\bibinfo {author} {\bibfnamefont {S.}~\bibnamefont
  {Yan}}, \bibinfo {author} {\bibfnamefont {D.~A.}\ \bibnamefont {Huse}}, \
  and\ \bibinfo {author} {\bibfnamefont {S.~R.}\ \bibnamefont {White}},\
  }\bibinfo {title} {\emph {Spin-liquid ground state of the S= 1/2 Kagome
  Heisenberg antiferromagnet}},\ \href@noop {} {\bibfield  {journal} {\bibinfo
  {journal} {Science}\ ,\ \bibinfo {pages} {1201080}} (\bibinfo {year}
  {2011})}\BibitemShut {NoStop}%
\bibitem [{\citenamefont {Poilblanc}\ \emph {et~al.}(2015)\citenamefont
  {Poilblanc}, \citenamefont {Cirac},\ and\ \citenamefont
  {Schuch}}]{PhysRevB.91.224431}%
  \BibitemOpen
  \bibfield  {author} {\bibinfo {author} {\bibfnamefont {D.}~\bibnamefont
  {Poilblanc}}, \bibinfo {author} {\bibfnamefont {J.~I.}\ \bibnamefont
  {Cirac}}, \ and\ \bibinfo {author} {\bibfnamefont {N.}~\bibnamefont
  {Schuch}},\ }\bibinfo {title} {\emph {Chiral topological spin liquids with
  projected entangled pair states}},\ \href {\doibase
  10.1103/PhysRevB.91.224431} {\bibfield  {journal} {\bibinfo  {journal} {Phys.
  Rev. B}\ }\textbf {\bibinfo {volume} {91}},\ \bibinfo {pages} {224431}
  (\bibinfo {year} {2015})}\BibitemShut {NoStop}%
\bibitem [{\citenamefont {Picot}\ \emph {et~al.}(2016)\citenamefont {Picot},
  \citenamefont {Ziegler}, \citenamefont {Or\'us},\ and\ \citenamefont
  {Poilblanc}}]{PhysRevB.93.060407}%
  \BibitemOpen
  \bibfield  {author} {\bibinfo {author} {\bibfnamefont {T.}~\bibnamefont
  {Picot}}, \bibinfo {author} {\bibfnamefont {M.}~\bibnamefont {Ziegler}},
  \bibinfo {author} {\bibfnamefont {R.}~\bibnamefont {Or\'us}}, \ and\ \bibinfo
  {author} {\bibfnamefont {D.}~\bibnamefont {Poilblanc}},\ }\bibinfo {title}
  {\emph {Spin-$S$ kagome quantum antiferromagnets in a field with tensor
  networks}},\ \href {\doibase 10.1103/PhysRevB.93.060407} {\bibfield
  {journal} {\bibinfo  {journal} {Phys. Rev. B}\ }\textbf {\bibinfo {volume}
  {93}},\ \bibinfo {pages} {060407} (\bibinfo {year} {2016})}\BibitemShut
  {NoStop}%
\bibitem [{\citenamefont {Liao}\ \emph {et~al.}(2017)\citenamefont {Liao},
  \citenamefont {Xie}, \citenamefont {Chen}, \citenamefont {Liu}, \citenamefont
  {Xie}, \citenamefont {Huang}, \citenamefont {Normand},\ and\ \citenamefont
  {Xiang}}]{PhysRevLett.118.137202}%
  \BibitemOpen
  \bibfield  {author} {\bibinfo {author} {\bibfnamefont {H.~J.}\ \bibnamefont
  {Liao}}, \bibinfo {author} {\bibfnamefont {Z.~Y.}\ \bibnamefont {Xie}},
  \bibinfo {author} {\bibfnamefont {J.}~\bibnamefont {Chen}}, \bibinfo {author}
  {\bibfnamefont {Z.~Y.}\ \bibnamefont {Liu}}, \bibinfo {author} {\bibfnamefont
  {H.~D.}\ \bibnamefont {Xie}}, \bibinfo {author} {\bibfnamefont {R.~Z.}\
  \bibnamefont {Huang}}, \bibinfo {author} {\bibfnamefont {B.}~\bibnamefont
  {Normand}}, \ and\ \bibinfo {author} {\bibfnamefont {T.}~\bibnamefont
  {Xiang}},\ }\bibinfo {title} {\emph {Gapless Spin-Liquid Ground State in the
  $S=1/2$ Kagome Antiferromagnet}},\ \href {\doibase
  10.1103/PhysRevLett.118.137202} {\bibfield  {journal} {\bibinfo  {journal}
  {Phys. Rev. Lett.}\ }\textbf {\bibinfo {volume} {118}},\ \bibinfo {pages}
  {137202} (\bibinfo {year} {2017})}\BibitemShut {NoStop}%
\bibitem [{\citenamefont {Chen}\ \emph {et~al.}(2018)\citenamefont {Chen},
  \citenamefont {Vanderstraeten}, \citenamefont {Capponi},\ and\ \citenamefont
  {Poilblanc}}]{chen2018non}%
  \BibitemOpen
  \bibfield  {author} {\bibinfo {author} {\bibfnamefont {J.-Y.}\ \bibnamefont
  {Chen}}, \bibinfo {author} {\bibfnamefont {L.}~\bibnamefont
  {Vanderstraeten}}, \bibinfo {author} {\bibfnamefont {S.}~\bibnamefont
  {Capponi}}, \ and\ \bibinfo {author} {\bibfnamefont {D.}~\bibnamefont
  {Poilblanc}},\ }\bibinfo {title} {\emph {Non-Abelian chiral spin liquid in a
  quantum antiferromagnet revealed by an iPEPS study}},\ \href@noop {}
  {\bibfield  {journal} {\bibinfo  {journal} {arXiv preprint arXiv:1807.04385}\
  } (\bibinfo {year} {2018})}\BibitemShut {NoStop}%
\bibitem [{\citenamefont {Wahl}\ \emph {et~al.}(2017)\citenamefont {Wahl},
  \citenamefont {Pal},\ and\ \citenamefont {Simon}}]{wahl2017efficient}%
  \BibitemOpen
  \bibfield  {author} {\bibinfo {author} {\bibfnamefont {T.~B.}\ \bibnamefont
  {Wahl}}, \bibinfo {author} {\bibfnamefont {A.}~\bibnamefont {Pal}}, \ and\
  \bibinfo {author} {\bibfnamefont {S.~H.}\ \bibnamefont {Simon}},\ }\bibinfo
  {title} {\emph {Efficient representation of fully many-body localized systems
  using tensor networks}},\ \href@noop {} {\bibfield  {journal} {\bibinfo
  {journal} {Phys. Rev. X}\ }\textbf {\bibinfo {volume} {7}},\ \bibinfo {pages}
  {021018} (\bibinfo {year} {2017})}\BibitemShut {NoStop}%
\bibitem [{\citenamefont {Verstraete}\ and\ \citenamefont {Cirac}()}]{PEPSOld}%
  \BibitemOpen
  \bibfield  {author} {\bibinfo {author} {\bibfnamefont {F.}~\bibnamefont
  {Verstraete}}\ and\ \bibinfo {author} {\bibfnamefont {J.~I.}\ \bibnamefont
  {Cirac}},\ }\href@noop {} {}\Eprint {http://arxiv.org/abs/0407066}
  {cond-mat:0407066}\BibitemShut {NoStop}%
\bibitem [{\citenamefont {Raussendorf}\ and\ \citenamefont
  {Briegel}(2001)}]{raussendorf_oneway_2001}%
  \BibitemOpen
  \bibfield  {author} {\bibinfo {author} {\bibfnamefont {R.}~\bibnamefont
  {Raussendorf}}\ and\ \bibinfo {author} {\bibfnamefont {H.~J.}\ \bibnamefont
  {Briegel}},\ }\bibinfo {title} {\emph {A one-way quantum computer}},\
  \href@noop {} {\bibfield  {journal} {\bibinfo  {journal} {Phys. Rev. Lett.}\
  }\textbf {\bibinfo {volume} {86}},\ \bibinfo {pages} {5188} (\bibinfo {year}
  {2001})}\BibitemShut {NoStop}%
\bibitem [{\citenamefont {Raussendorf}\ \emph {et~al.}(2003)\citenamefont
  {Raussendorf}, \citenamefont {Browne},\ and\ \citenamefont
  {Briegel}}]{raussendorf_cluster_2003}%
  \BibitemOpen
  \bibfield  {author} {\bibinfo {author} {\bibfnamefont {R.}~\bibnamefont
  {Raussendorf}}, \bibinfo {author} {\bibfnamefont {D.~E.}\ \bibnamefont
  {Browne}}, \ and\ \bibinfo {author} {\bibfnamefont {H.~J.}\ \bibnamefont
  {Briegel}},\ }\bibinfo {title} {\emph {Measurement-based quantum computation
  with cluster states}},\ \href@noop {} {\bibfield  {journal} {\bibinfo
  {journal} {Phys. Rev. A}\ }\textbf {\bibinfo {volume} {68}},\ \bibinfo
  {pages} {022312} (\bibinfo {year} {2003})}\BibitemShut {NoStop}%
\bibitem [{\citenamefont {Gross}\ and\ \citenamefont
  {Eisert}(2007)}]{gross2007novel}%
  \BibitemOpen
  \bibfield  {author} {\bibinfo {author} {\bibfnamefont {D.}~\bibnamefont
  {Gross}}\ and\ \bibinfo {author} {\bibfnamefont {J.}~\bibnamefont {Eisert}},\
  }\bibinfo {title} {\emph {Novel schemes for measurement-based quantum
  computation}},\ \href@noop {} {\bibfield  {journal} {\bibinfo  {journal}
  {Phys. Rev. Lett.}\ }\textbf {\bibinfo {volume} {98}},\ \bibinfo {pages}
  {220503} (\bibinfo {year} {2007})}\BibitemShut {NoStop}%
\bibitem [{\citenamefont {Lipton}(1991)}]{lipton_permanent_1991}%
  \BibitemOpen
  \bibfield  {author} {\bibinfo {author} {\bibfnamefont {R.}~\bibnamefont
  {Lipton}},\ }\bibinfo {title} {\emph {New directions in testing}},\
  \href@noop {} {\bibfield  {journal} {\bibinfo  {journal} {Distributed
  Computing and Cryprography}\ ,\ \bibinfo {pages} {191}} (\bibinfo {year}
  {1991})}\BibitemShut {NoStop}%
\bibitem [{\citenamefont {Valiant}(1979)}]{valiant_permanent_1979}%
  \BibitemOpen
  \bibfield  {author} {\bibinfo {author} {\bibfnamefont {L.~G.}\ \bibnamefont
  {Valiant}},\ }\bibinfo {title} {\emph {Quantum cicuits that can be simulated
  classically in polynomial time}},\ \href@noop {} {\bibfield  {journal}
  {\bibinfo  {journal} {SIAM Comput.}\ }\textbf {\bibinfo {volume} {31}},\
  \bibinfo {pages} {1229} (\bibinfo {year} {1979})}\BibitemShut {NoStop}%
\bibitem [{\citenamefont {Bachmann}\ \emph {et~al.}(2012)\citenamefont
  {Bachmann}, \citenamefont {Michalakis}, \citenamefont {Nachtergaele},\ and\
  \citenamefont {Sims}}]{NachtergaelePhases}%
  \BibitemOpen
  \bibfield  {author} {\bibinfo {author} {\bibfnamefont {S.}~\bibnamefont
  {Bachmann}}, \bibinfo {author} {\bibfnamefont {S.}~\bibnamefont
  {Michalakis}}, \bibinfo {author} {\bibfnamefont {B.}~\bibnamefont
  {Nachtergaele}}, \ and\ \bibinfo {author} {\bibfnamefont {R.}~\bibnamefont
  {Sims}},\ }\bibinfo {title} {\emph {Automorphic equivalence within gapped
  phases of quantum lattice systems}},\ \href@noop {} {\bibfield  {journal}
  {\bibinfo  {journal} {Commun. Math. Phys.}\ }\textbf {\bibinfo {volume}
  {309}},\ \bibinfo {pages} {835} (\bibinfo {year} {2012})}\BibitemShut
  {NoStop}%
\bibitem [{\citenamefont {Hastings}()}]{HastingsQuasiAdiabatic}%
  \BibitemOpen
  \bibfield  {author} {\bibinfo {author} {\bibfnamefont {M.~B.}\ \bibnamefont
  {Hastings}},\ }\href@noop {} {\bibinfo {title} {\emph {Quasi-adiabatic
  continuation for disordered systems: Applications to correlations,
  Lieb-Schultz-Mattis, and Hall conductance}},\ }\bibinfo {note}
  {ArXiv:1001.5280}\BibitemShut {NoStop}%
\bibitem [{\citenamefont {Rakhmanov}(2007)}]{rakhmanov_interpolation_2007}%
  \BibitemOpen
  \bibfield  {author} {\bibinfo {author} {\bibfnamefont {E.~A.}\ \bibnamefont
  {Rakhmanov}},\ }\bibinfo {title} {\emph {Bounds for Polynomials with a Unit
  Discrete Norm}},\ \href@noop {} {\bibfield  {journal} {\bibinfo  {journal}
  {Ann. Math.}\ }\textbf {\bibinfo {volume} {Vol. 165}},\ \bibinfo {pages} {55}
  (\bibinfo {year} {2007})}\BibitemShut {NoStop}%
\bibitem [{\citenamefont {Paturi}(1992)}]{paturi_extrapolation_1992}%
  \BibitemOpen
  \bibfield  {author} {\bibinfo {author} {\bibfnamefont {R.}~\bibnamefont
  {Paturi}},\ }\bibinfo {title} {\emph {On the degree of polynomials that
  approximate symmetric Boolean functions}},\ \href@noop {} {\bibfield
  {journal} {\bibinfo  {journal} {Proc. ACM STOC}\ ,\ \bibinfo {pages} {468}}
  (\bibinfo {year} {1992})}\BibitemShut {NoStop}%
\bibitem [{\citenamefont {Welch}\ and\ \citenamefont
  {Berlekamp}(1986)}]{welch_polynomials_86}%
  \BibitemOpen
  \bibfield  {author} {\bibinfo {author} {\bibfnamefont {L.}~\bibnamefont
  {Welch}}\ and\ \bibinfo {author} {\bibfnamefont {E.}~\bibnamefont
  {Berlekamp}},\ }\bibinfo {title} {\emph {Error correction for algbraic block
  codes}},\ \href@noop {} {\bibfield  {journal} {\bibinfo  {journal} {US
  Patent}\ ,\ \bibinfo {pages} {4,633,470}} (\bibinfo {year}
  {1986})}\BibitemShut {NoStop}%
\end{thebibliography}
%

\appendix
\section{Variants and proof of Theorem \ref{theorem:main}}\label{appendix:proof}
As explained in the main text, our result comes in different flavors. Here, we present our results in full technical detail. 
We formalize the problem of evaluating expectation values of local observables in the following two problems:
\begin{problem}[PEPS-contraction:UEV]
	\label{problem:peps contractionUEV}
	\begin{algorithmic}
		\REQUIRE
		The same input as in Problem \ref{problem:peps contraction} and additionally a local observable $\hat{A}$.
		\ENSURE
		$\langle \psi |\hat{A}|\psi\rangle$.
	\end{algorithmic}
\end{problem}

\begin{problem}[PEPS-contraction:NEV]
	\label{problem:peps contractionNEV}
	\begin{algorithmic}
		\REQUIRE
		The same input as in Problem \ref{problem:peps contraction} and additionally a local observable $\hat{A}$.
		\ENSURE
		$\langle\psi|\hat{A}|\psi\rangle/{\braket{\psi}{\psi}}$.
	\end{algorithmic}
\end{problem} 

We prove all results for two canonical choices: The first is to draw entry-wise from a uniform distribution centered around zero and truncated at some chosen threshold $\sigma$, which we will denote by $\mathcal{U}=\mathcal{U}_{\mathbb{C}}(0,\sigma)$ and the product distribution by $\mathcal{P}_1:=\mathcal{U}^{D^4dN}$. Almost equivalently we could draw from a Gaussian distribution. We will denote this Gaussian distribution in this appendix with $\mathcal{P}_2:=\mathcal{G}^{D^4dN}:=\mathcal{N}_{\mathbb{C}}(0,\sigma)^{D^4dN}$. This is reminiscent to a discussion about the permanent with entries in the complex numbers in Ref.\ \cite[Section 9.1]{aaronson_bosonsampling_2010}. 
More precisely, we prove the following technical theorems:

\begin{theorem}[Worst-to-average reduction]\label{theorem:precision12}
	Suppose there exists a machine $\mathcal{O}$ that solves Problem~\ref{problem:peps contraction} or Problem \ref{problem:peps contractionUEV} within precision $2^{-\poly N}$ for square lattices in polynomial time with a probability of $1-\frac{1}{12N}$ over the instance drawn from $\mathcal{P}_i$ for $i=1,2$. Then, there exists a machine $\mathcal{O}'$ that solves any instance with precision $2^{-\poly(N)}$ of the respective problem in randomized polynomial time with exponentially high probability.
\end{theorem}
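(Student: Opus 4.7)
The plan is to mirror the structure of the proof of Theorem~\ref{theorem:main}, but to replace Berlekamp--Welch polynomial interpolation, which tolerates corrupted but exact samples, by a \emph{noisy polynomial extrapolation} result due to Rakhmanov~\cite{rakhmanov_interpolation_2007}. The setup stays the same: given a worst-case instance $(P^{[v]})_v$, draw $(Q^{[v]})_v \sim \mathcal{P}_i$ and define the affine path $R(t)^{[v]} := t\, P^{[v]} + (1-t)\, Q^{[v]}$. The scalar function $q(t) \coloneqq \langle \psi(t) \mid \psi(t) \rangle$ is again a polynomial of degree $r = 2N$ in $t$, and the target value is $q(1)$, while the sampling points $t_i$ are chosen in a small interval $[0,\varepsilon]$ with $\varepsilon = \Theta(1/N)$ so that the marginal distribution of $R(t_i)$ is close in total variation to $\mathcal{P}_i$.

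First I would control the distribution mismatch: for each fixed $t \in [0,\varepsilon]$, the pushforward of $\mathcal{P}_i$ under $Q \mapsto R(t)$ has total variation distance $O(t)$ from $\mathcal{P}_i$, by the same computation used in Ref.~\cite{aaronson_bosonsampling_2010} for the permanent. Consequently the probability that $\mathcal{O}$ fails on $R(t_i)$ is at most $1/(12N) + O(\varepsilon)$. Choosing $k = c N$ sampling points with $c$ a small constant and $\varepsilon$ polynomially small, Markov's inequality on the number of failures shows that, with constant probability, $\mathcal{O}$ correctly evaluates $q(t_i)$ to precision $2^{-\poly N}$ on a fraction $1 - 1/6$ of the $k$ points (say).

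The key step is the reconstruction of $q(1)$ from the resulting noisy samples. Here I would invoke Rakhmanov's bound on polynomials with unit discrete norm: a degree-$r$ polynomial $q$ that is bounded by $\delta$ on a $(1-\eta)$ fraction of $k$ equally spaced nodes in $[0,\varepsilon]$ satisfies $|q(1)| \leq \delta \cdot F(r,k,\varepsilon,\eta)$, where $F$ is the explicit Rakhmanov factor, singly exponential in $r$. Concretely, one replaces the true samples by their noisy versions returned by $\mathcal{O}$, fits a degree-$r$ polynomial $\tilde q$ by least squares on the "good" nodes (identified via a brute-force search over subsets of size $(1-1/6)k$, which is polynomial since $k = O(N)$), and applies Rakhmanov to the error polynomial $q - \tilde q$, whose discrete norm on $[0,\varepsilon]$ is at most $2^{-\poly N}$. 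Because $r = 2N$ and $\varepsilon = \Theta(1/N)$, the Rakhmanov blow-up is at most $2^{O(N \log N)}$, so starting from a precision $2^{-\poly N}$ with a sufficiently large polynomial exponent suffices to recover $q(1)$ within $2^{-\poly N}$. Standard repetition and majority vote then amplify the success probability to $1 - 2^{-\poly(N)}$, defining $\mathcal{O}'$. The extension to Problem~\ref{problem:peps contractionUEV} is immediate because $\langle \psi(t) | \hat A | \psi(t)\rangle$ is likewise a polynomial of degree $2N$ in $t$, so the same extrapolation machinery applies verbatim.

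The main obstacle is the quantitative balance between three competing requirements: the interval $\varepsilon$ must be small enough that the shifted distribution is statistically close to $\mathcal{P}_i$ and the union-bounded failure rate of $\mathcal{O}$ over the $k$ nodes stays below the noise-tolerance threshold of Rakhmanov's theorem; at the same time $\varepsilon$ must be large enough (relative to $r = 2N$) that the exponential-in-degree extrapolation factor from $[0,\varepsilon]$ out to $1$ does not exhaust the available precision budget $2^{-\poly N}$. Nailing down the precise polynomials so that the three bounds close simultaneously is the genuinely delicate part, and it is exactly this tension that forces both the exponential precision assumption on $\mathcal{O}$ and the stronger $1 - 1/(12N)$ success rate in the hypothesis.
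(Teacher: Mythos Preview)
Your overall strategy---the affine path $R(t)=tP+(1-t)Q$, the degree-$2N$ polynomial $q(t)$, sampling in $[0,\varepsilon]$ for distributional closeness, and Rakhmanov-type control of the extrapolation error---matches the paper's. But there is a genuine gap in your reconstruction step. You propose to identify the ``good'' nodes by a brute-force search over subsets of size $(1-\tfrac{1}{6})k$ and assert this is polynomial because $k=O(N)$. This is false: the number of such subsets is $\binom{k}{k/6}$, which is exponential in $k$ and hence in $N$. Moreover, Rakhmanov's theorem as used in the paper requires the bound to hold at \emph{all} equidistant nodes, not merely a fraction of them, so even granting the search, the theorem would not directly apply to the surviving (non-equidistant) subset.

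The paper avoids this entirely by taking exactly $r+1=2N+1$ equidistant points and demanding that \emph{all} evaluations succeed. Since each fails with probability at most $2\delta=\tfrac{1}{6N}$, Bernoulli's inequality gives $\Pr[\text{all correct}]\geq(1-\tfrac{1}{6N})^{2N+1}\geq\tfrac{2}{3}-\tfrac{1}{6N}$. With every node good one solves the linear system for $\tilde q$ directly; the error polynomial $p=\tilde q-q$ is then bounded by $2^{-\poly N}$ at all $r+1$ nodes. Rakhmanov's theorem bounds $|p|$ on a subinterval of $[0,\varepsilon]$, and a separate extrapolation lemma of Paturi (which you do not invoke, having folded everything into a single ``Rakhmanov factor'') then yields $|p(1)|\leq 2^{-\poly N}\cdot e^{2r(1+O(1/\varepsilon))}=2^{-\poly' N}$. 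The hypothesis $1-\tfrac{1}{12N}$ is tuned precisely so that a union bound over $O(N)$ queries already gives constant success probability for the full batch---this is why it is stronger than the $\tfrac34+\tfrac{1}{\poly N}$ of Theorem~\ref{theorem:main}, and why no failure-tolerance mechanism is needed here at all.
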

We will prove this theorem first, as it requires the most technical work.
If we do not relax to exponential precision but require perfect arithmetical evaluation of the machine $\mathcal O$, we obtain a much stronger worst-to-average reduction:

\begin{theorem}[Stronger worst-to-average reduction]\label{theorem:main2}
	Suppose it exists a machine $\mathcal{O}$ that solves Problem \ref{problem:peps contraction} or \ref{problem:peps contractionUEV} exactly for square lattices in polynomial time with a probability of $\frac{3}{4}+\frac{1}{\poly N}$ drawn from $\mathcal{P}_i$, with $i=1,2$. Then, there exists a machine that solves any instance of the respective problem in randomized polynomial time with exponentially high precision.
\end{theorem}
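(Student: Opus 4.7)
The plan is to execute the random self-reducibility strategy already sketched in the main text, quantifying each step to match the weaker hypothesis of only $3/4+1/\poly(N)$ oracle success. Given a worst-case instance $(P^{[v]})_v$ on a square lattice with $N$ vertices, I draw $(Q^{[v]})_v$ from $\mathcal{P}_i$ and form the affine family $R(t)=tP+(1-t)Q$ in the vector space of PEPS-data. Writing $\ket{\psi(t)}$ for the PEPS obtained from $R(t)$, the map $q(t):=\braket{\psi(t)}{\psi(t)}$ is a univariate polynomial of degree $r\leq 2N$: each of the $2N$ tensors (one per site in the ket, one per site in the bra) contributes linearly in $t$, and contraction with the maximally entangled edge states is multilinear. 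I pick $k=\Theta(N)$ nodes $t_1,\dots,t_k\in(0,\varepsilon]$ with $k>r$ and $\varepsilon=1/\poly(N)$ to be fixed below, and query $\mathcal{O}$ at each corresponding PEPS-data $R(t_j)$.

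The technical core is distribution closeness. For each $j$, $R(t_j)$ has the same law as a translate of the random draw, so the pushforward differs from $\mathcal{P}_i$ only through the shift $t_j P$. For the Gaussian case $\mathcal{P}_2=\mathcal{G}^{D^4dN}$ the total variation distance is bounded by $O(t_j\,\Norm{P}_F/\sigma)$ by Pinsker's inequality, and the uniform case $\mathcal{P}_1$ is handled analogously after truncation; after rescaling the worst-case input so that $\Norm{P}_F\leq \poly(N)$ (absorbing the overall scale factor $\lambda^{2N}$ into the final answer) and choosing $\varepsilon\leq \sigma/\poly(N)$ small enough, a union bound over the $k$ nodes ensures that with probability $1/2+1/\poly(N)$ the oracle returns the exact value $q(t_j)$ on at least $(k+r)/2$ of the nodes. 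This is the threshold above which the Berlekamp--Welch algorithm \cite{welch_polynomials_86} reconstructs the coefficients of $q$ exactly in polynomial time from the possibly-adversarially-corrupted evaluations, and in particular yields $q(1)=\braket{\psi}{\psi}$. Standard Chernoff-type amplification over $\poly(N)$ independent repetitions followed by majority vote boosts the success probability of $\mathcal{O}'$ to $1-2^{-\poly(N)}$.

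Two immediate extensions complete the theorem. For Problem~\ref{problem:peps contractionUEV}, the substitution $q(t)\mapsto \Sandwich{\psi(t)}{\hat A}{\psi(t)}$ still gives a polynomial of degree $\leq 2N$, since $\hat A$ acts on a constant-sized support; the entire argument carries through verbatim. For the translation-invariant variant, I draw a single $Q$ from $\mathcal{G}^{D^4d}$ and set $Q^{[v]}=Q$ at every site, so that $R(t)^{[v]}=tP^{[v]}+(1-t)Q$ remains translation-invariant when $P$ is; the degree bound and the distribution-closeness estimate are unaffected.

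The main obstacle is simultaneously satisfying three competing requirements on $\varepsilon$ and $k$: the nodes must be close enough to $0$ for the TV bound to keep the oracle's success probability on each $R(t_j)$ above $3/4+1/\poly(N)$, they must be pairwise distinct in a way that makes Berlekamp--Welch's linear system well-posed (here trivial since we work in exact arithmetic), and $k>r=2N$ so that interpolation is possible at all. This is precisely where the perfect-precision hypothesis buys the most: one does not need a robust, Chebyshev-node-style reconstruction of $q$ at the extrapolation point $t=1$, which is the delicate ingredient that forces the stronger $1-1/(12N)$ oracle-success requirement in Theorem~\ref{theorem:precision12}.
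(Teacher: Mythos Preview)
Your proof follows the same route as the paper's: form the degree-$2N$ polynomial $q(t)=\braket{\psi(t)}{\psi(t)}$ along the line $R(t)=tP+(1-t)Q$, query $\mathcal{O}$ at $k>r$ points inside $(0,\varepsilon]$, control the total-variation shift so the per-query success stays at $3/4+1/\poly(N)$, feed the answers to Berlekamp--Welch, and amplify. Two details, however, need repair.

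First, $R(t_j)$ does not differ from a fresh draw of $\mathcal{P}_i$ only by the translate $t_jP$: the factor $(1-t_j)$ also rescales the variance of $Q$, so the distribution is $\prod_i \mathcal{N}_{\mathbb{C}}(t_jp_i,(1-t_j)^2)$ rather than a pure shift. You need both contributions in the TV bound (this is exactly what Lemma~\ref{lemma:gaussiandistribution} handles); Pinsker on the shift alone is not enough. Second, a ``union bound over the $k$ nodes'' cannot deliver the threshold $(k+r)/2$ correct evaluations from a per-query success of $3/4+1/\poly(N)$: with $k=\Theta(N)$ a union bound on failures is vacuous. The paper instead applies Markov's inequality to the number of wrong answers, and for that bound to exceed $1/2$ one needs $k>N\cdot\poly(N)$, i.e.\ $k$ must be a polynomial of strictly larger degree than the one appearing in the oracle's advantage. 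With $k=\Theta(N)$ the argument as written fails; take $k=\poly(N)$ large enough and replace the union bound by Markov, and the rest goes through exactly as you describe.
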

Notice that Theorem \ref{theorem:main} is a special case of the above. Namely, it corresponds to the choice of Problem \ref{problem:peps contraction} and probability distribution $\mathcal P=\mathcal{P}_2$.
Finally, again requiring perfect evaluation, we obtain worst-to-average reduction for the normalized expectation value problem as well:

\begin{theorem}[Normalized expectation values]\label{theorem:main_NEV}
	Suppose it exists a machine $\mathcal{O}$ that solves Problem \ref{problem:peps contractionNEV} exactly for square lattices in polynomial time with a probability of $1-\frac{1}{24N}$ drawn from $P_i$ with $i=1,2$. Then there exists a machine that solves any instance of the respective problem in randomized polynomial time with exponentially high precision.
\end{theorem}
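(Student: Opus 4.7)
The plan is to adapt the worst-to-average reduction behind Theorem \ref{theorem:main2} to the rational-function structure forced by division by the norm. Starting from a worst-case instance $(P^{[v]})_v$, I would draw $(Q^{[v]})_v\sim \mathcal{P}_i$ and define the scrambled family $(R(t)^{[v]})_v:=t(P^{[v]})_v+(1-t)(Q^{[v]})_v$ with $t\in \mathbb{R}$. Denoting the associated PEPS by $\ket{\psi(t)}$, every tensor enters linearly in the ket and (since $t$ is real) linearly in the bra, so both
\[
q(t):=\langle \psi(t)|\hat{A}|\psi(t)\rangle, \qquad p(t):=\langle\psi(t)|\psi(t)\rangle
\]
are polynomials in $t$ of degree at most $r:=2N$. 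Consequently the normalized expectation value $f(t)=q(t)/p(t)$ is a rational function of known bidegree.

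Next, I would pick $k=4N+2$ distinct sampling points $t_1,\dots,t_k\in [0,\varepsilon)$ with $\varepsilon=1/\poly(N)$ chosen small enough that the distribution of $(R(t_j)^{[v]})_v$ has total variation distance $o(1/N)$ from $\mathcal{P}_i$; in the Gaussian case this comes from bounding the distance between $\mathcal{N}(tP_{ij},(1-t)^2\sigma^2)$ and $\mathcal{N}(0,\sigma^2)$ coordinatewise, and the truncated uniform case is handled as in \cite{aaronson_bosonsampling_2010}. A TV-coupling argument then shows $\mathcal{O}$ outputs the correct value $f(t_j)$ with probability at least $1-\tfrac{1}{24N}-o(1/N)$ on each query, and a union bound over the $k$ queries gives the joint event ``all values correct'' probability at least $1-(4N+2)/(24N)-o(1)>5/6$. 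Conditioned on that event, solving the homogeneous linear system $q(t_j)-f(t_j)p(t_j)=0$ in the $4N+2$ unknown coefficients recovers $(q,p)$ up to a common scalar: two distinct rational functions of bidegree $\le (2N,2N)$ can agree on at most $4N<k$ points, so the rational function is pinned down uniquely. Generic $(Q^{[v]})_v$ ensures $p\not\equiv 0$ and that the $t_j$ avoid the at most $2N$ roots of $p$. Evaluating at $t=1$ yields the worst-case normalized expectation value, and repeating $\poly(N)$ times and taking the majority vote amplifies the success probability to $1-2^{-\poly(N)}$ by a Chernoff bound.

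The main obstacle, and the reason the required success probability of $\mathcal{O}$ sharpens from the $3/4+1/\poly(N)$ of Theorem \ref{theorem:main2} to the much more stringent $1-1/(24N)$, is the absence of an error-tolerant Berlekamp-Welch style decoder for rational functions. In the polynomial case a constant fraction of corrupted sample values is no problem since Berlekamp-Welch treats them as an error-correcting code; here a single faulty evaluation $f(t_j)$ generically corrupts every coefficient obtained from the linear solve, so one genuinely needs \emph{all} $\Theta(N)$ samples to be correct simultaneously. This forces the per-query failure probability to scale like $1/N$ so that the union bound over sampling points still leaves a constant success margin. A more refined reconstruction (e.g.\ a noise-resistant Padé-type algorithm, or a reduction that decouples numerator and denominator so that each can be interpolated by Berlekamp-Welch independently) would be the natural route to weakening the hypothesis to match Theorem \ref{theorem:main2}, but no such tool appears to be available in the required regime.
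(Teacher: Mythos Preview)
Your proposal is correct and follows essentially the same route as the paper: identify the normalized expectation value along the interpolation line as a rational function $q/p$ with $\deg q,\deg p\le 2N$, query $\mathcal{O}$ at $\Theta(N)$ nearby points, bound the per-query failure via the total-variation argument, use a union/Bernoulli bound so that all queries succeed with constant probability, solve the linear system for the coefficients, evaluate at $t=1$, and amplify. The paper uses $2r+1=4N+1$ samples (matching the $4N+1$ degrees of freedom after scaling) and the specific choice $\delta=\tfrac{1}{24N}$ with TV distance tuned to $\delta$, arriving at success probability $\ge \tfrac{2}{3}-\tfrac{1}{12N}$; your choice of $4N+2$ samples and a smaller $\varepsilon$ giving TV $=o(1/N)$ is an inessential variant yielding the slightly better $\ge \tfrac{5}{6}-o(1)$. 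Your explicit remark that generic $(Q^{[v]})_v$ ensures $p\not\equiv 0$ and that the $t_j$ avoid the zeros of $p$ is a useful detail the paper leaves implicit, and your diagnosis of why no Berlekamp--Welch step is available here matches the paper's stated reason for the stronger hypothesis.
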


\subsection{Proof of Theorem \ref{theorem:precision12}}
Before we turn to presenting the proof, we state a lemma which resembles Lemma 48 in Ref.\ \cite{aaronson_bosonsampling_2010}.	Let us denote with $\mathcal{N}_{\mathbb{C}}(\mu,\sigma)$ the normal distribution over the complex numbers with mean $\mu$ and standard deviation $\sigma$. The lemma establishes that products of normal distributions with small mean are close to a product of the standard normal distribution with zero mean.
\begin{lemma}[Gaussian distributions]\label{lemma:gaussiandistribution}
 For the distributions 
	\begin{align}
	\mathcal{D}_1&:=\mathcal{N}_{\mathbb{R}}\left(0,(1-\varepsilon)^2\sigma\right)^M,\\
	\mathcal{D}_2&:=\prod_{i=1}^M\mathcal{N}_{\mathbb{R}}(v_i,\sigma)
	\end{align}
	with $v\in \mathbb{C}^M$, it holds that
	\begin{align}
	||\mathcal{D}_1-\mathcal{N}_{\mathbb{R}}(0,\sigma)^M||&\leq 2M\varepsilon,\\
	||\mathcal{D}_2-\mathcal{N}_{\mathbb{R}}(0,\sigma)^M||&\leq\frac{1}{\sigma}||v||_1,
	\end{align}
	where $||\bullet||$ denotes the total variation distance and $v\in \mathbb{C}^M$. The same result holds if we substitute $\mathcal{N}$ with $\mathcal{U}$.
\end{lemma}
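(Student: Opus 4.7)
The plan is to reduce both bounds to one-dimensional estimates by a standard hybrid argument, and then bound each single-coordinate total variation distance by an explicit computation. The key general principle, which I invoke first, is subadditivity of total variation distance for product measures: if $\mathcal{D}_1 = \bigotimes_i \mu_i$ and $\mathcal{D}_2 = \bigotimes_i \nu_i$, then
\[
\|\mathcal{D}_1 - \mathcal{D}_2\|_{TV} \le \sum_i \|\mu_i - \nu_i\|_{TV}.
\]
This follows by a telescoping replacement of factors one at a time and the triangle inequality, using that taking tensor product with a fixed probability measure is an isometry for total variation. With this reduction it suffices to show that $\|\mathcal{N}_{\mathbb{R}}(0,(1-\varepsilon)^2 \sigma) - \mathcal{N}_{\mathbb{R}}(0,\sigma)\|_{TV} \le 2\varepsilon$ and $\|\mathcal{N}_{\mathbb{R}}(v_i,\sigma) - \mathcal{N}_{\mathbb{R}}(0,\sigma)\|_{TV} \le |v_i|/\sigma$, and analogously for the uniform case.

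For the Gaussian estimates my preferred route is Pinsker's inequality combined with the closed-form KL divergence between univariate normals. In the different-mean case one has $D_{\mathrm{KL}}(\mathcal{N}(v_i,\sigma)\,\|\,\mathcal{N}(0,\sigma)) = v_i^2/(2\sigma^2)$, so Pinsker immediately gives a single-coordinate bound of $|v_i|/(2\sigma)$; summation then yields $\|v\|_1/(2\sigma)\le\|v\|_1/\sigma$. For the same-mean, different-variance case, $D_{\mathrm{KL}}(\mathcal{N}(0,r\sigma)\,\|\,\mathcal{N}(0,\sigma)) = -\log r + (r^2 - 1)/2$ evaluated at $r = (1-\varepsilon)^2$ is of order $\varepsilon^2$ with a small explicit constant, which via Pinsker gives a per-coordinate bound $C\varepsilon$. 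Tightening the constant to the stated $2\varepsilon$ is where, if necessary, I would bypass Pinsker in favour of direct $L^1$ integration of $|p_{\sigma'} - p_\sigma|$ against the explicit density ratio, or use the coupling $X' = (1-\varepsilon)^2 X$ between the two Gaussians, both of which are tight.

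For the uniform case the analysis is purely geometric. Writing $\mathcal{U}(0,\sigma)$ for the uniform density $1/(2\sigma)$ on $[-\sigma,\sigma]$, a direct integration yields $\|\mathcal{U}(0,(1-\varepsilon)^2\sigma) - \mathcal{U}(0,\sigma)\|_{TV} = 1 - (1-\varepsilon)^2 \le 2\varepsilon$, while $\|\mathcal{U}(v_i,\sigma) - \mathcal{U}(0,\sigma)\|_{TV}$ equals half the normalized Lebesgue measure of the symmetric difference of the two supports, which is $|v_i|/(2\sigma)$ when $|v_i| \le 2\sigma$ and trivially bounded by $1 \le |v_i|/(2\sigma)$ for larger shifts. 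Summing over $i$ in both cases and combining with the hybrid step delivers the stated bounds. The main obstacle I anticipate is not conceptual but bookkeeping-level: arranging the constants so that the final prefactors match the claimed $2M\varepsilon$ and $\|v\|_1/\sigma$ exactly; if Pinsker's bound loses an unrecoverable factor somewhere, the direct coupling and $L^1$ arguments above suffice to recover the stated constants.
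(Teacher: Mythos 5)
Your proposal is correct and structurally identical to the paper's proof: both reduce the product-measure statement to single-coordinate estimates via the triangle inequality (your telescoping hybrid step is exactly the paper's first line), and both then bound the one-dimensional total variation distances coordinate by coordinate. The only divergence is in how those one-dimensional bounds are obtained. The paper computes the $L^1$ distance between the two Gaussian densities directly, getting $\varepsilon/(1-\varepsilon)\le 2\varepsilon$ per coordinate for the variance perturbation and $|v_i|/\sigma$ for the mean shift; you instead invoke Pinsker's inequality with the closed-form KL divergences. For the mean shift this immediately yields the even stronger constant $|v_i|/(2\sigma)$. For the variance case the divergence $-\log r+(r^2-1)/2$ at $r=(1-\varepsilon)^2$ satisfies $D_{\mathrm{KL}}\le 8\varepsilon^2$ for $\varepsilon\le 1/2$ (and the claim is trivial for larger $\varepsilon$ since total variation is at most $1$), so Pinsker does deliver the stated $2\varepsilon$ per coordinate; the constant worry you flag is real but resolvable, and the fallback you name — direct $L^1$ integration — is precisely the paper's method. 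Your explicit geometric treatment of the uniform case is actually more detailed than the paper's, which only asserts that case is "obtained similarly."
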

\begin{proof}[Proof of Lemma \ref{lemma:gaussiandistribution}]
	We prove the lemma for the Gaussian case. The uniform can be obtained similarly. We obtain with the triangle inequality for the total variation distance:
	 \begin{equation}
     ||\mathcal{D}_1-\mathcal{G}^M||\leq M||\mathcal{N}_{\mathbb{R}}(0,(1-\varepsilon)^2\sigma)-\mathcal{N}_{\mathbb{R}}(0,\sigma)||.
     \end{equation}
     With the relation between total variation distance and $L^1$-norm, we obtain
     \begin{align}
      &||\mathcal{D}_1-\mathcal{G}^M||\nonumber\\
      &\leq \frac{M}{2}\int_{-\infty}^{\infty}\left|\frac{1}{\sqrt{2\pi\sigma}}e^{-\frac{x^2}{2\sigma^2}}-\frac{1}{\sqrt{2\pi}\sigma(1-\varepsilon)}e^{-\frac{x^2}{2\sigma^2(1-\varepsilon)^2}}\right|\mathrm{d}x\nonumber\\
      &=\frac{M}{2\sqrt{2\pi}\sigma (1-\varepsilon)}\int_{-\infty}^{\infty}\left|(1-\varepsilon)e^{-\frac{x^2}{2\sigma^2}}-e^{-\frac{x^2}{2\sigma^2(1-\varepsilon)^2}}\right|\mathrm{d}x\nonumber\\
      &\leq \frac{M\varepsilon}{2\sqrt{2\pi}\sigma(1-\varepsilon)}\int_{-\infty}^{\infty}e^{-\frac{x^2}{2\sigma^2}}\nonumber\\&+\frac{M}{2\sqrt{2\pi}\sigma(1-\varepsilon)}\int_{-\infty}^{\infty}e^{-\frac{x^2}{2\sigma^2}}-e^{-\frac{x^2}{2\sigma^2(1-\varepsilon)^2}}\mathrm{d}x\nonumber\\
      &=\frac{M\varepsilon}{2(1-\varepsilon)}+\frac{M}{2(1-\varepsilon)}-\frac{M}{2}=\frac{M\varepsilon}{1-\varepsilon}\leq 2M\varepsilon.
     \end{align}
     The second inequality follows using again the triangle inequality:
     \begin{align}
     &||\mathcal{D}_2-\mathcal{G}^M||\leq\sum_{i=1}^M\left|\left|\mathcal{N}_{\mathbb{R}}(v_i,\sigma)-\mathcal{N}_{\mathbb{R}}(0,\sigma)\right|\right|\nonumber\\
     &=\sum_{i=1}^M\frac{1}{2}\int_{-\infty}^{\infty}\left|\frac{1}{\sqrt{2\pi}\sigma}e^{-\frac{(x-v_i)^2}{2\sigma^2}}-\frac{1}{\sqrt{2\pi}\sigma}e^{-\frac{x^2}{2\sigma^2}}\right|\mathrm{d}x\nonumber\\
     &=\sum_{i=1}^M\frac{1}{2\sqrt{2\pi}}\int_{-\infty}^{\infty}\left|e^{-\frac{(x-v_i/\sigma)^2}{2}}-e^{-\frac{x^2}{2}}\right|\mathrm{d}x\nonumber\\
     &\leq \sum_{i=1}^M\frac{|v_i|}{\sigma}=\frac{||v||_1}{\sigma},
     \end{align}
     where the last inequality follows from a straightforward calculation.
\end{proof}
\begin{proof}[Proof of Theorem \ref{theorem:precision12}] 
	For simplicity, we set $\sigma=1$. Furthermore, we restrict to the case of Problem \ref{problem:peps contraction} as the proof for the case of Problem \ref{problem:peps contractionUEV} is completely analogous.
	Consider Problem \ref{problem:peps contraction} and a hard instance defined by the data $(P^{[v]})_v$, e.g. the encoding of a Boolean function as was done in
	Ref.\ \cite{schuch_hardness_2007}. It suffices to consider a $(P^{[v]})_v$ with all matrix entries being bounded by $1$ as all instances constructed in 
	Ref.\ \cite{schuch_hardness_2007} admit this form. Furthermore, we draw PEPS-data from the standard Gaussian distribution entry-wise, denoted as $\left(Q^{[v]}\right)_v\sim \mathcal{G}^{D^4dN}$. Analogously to \citet{lipton_permanent_1991}, we define 
	\begin{equation}
	\left(R(t)^{[v]}\right)_v := t\left (P^{[v]}\right)_v+(1-t)\left(Q^{[v]}\right)_v.
	\end{equation}
	Now, let $\ket{\psi(t)}$ denote the PEPS corresponding to this data.
	In analogy to the discussion of the permanent, we define the function $q(t):=\langle \psi(t)\ket{\psi(t)}$. 
	Notice that this function is a polynomial in $t$ with degree $r=2N$, which scales polynomially in the input length. 
	Before we can apply Theorem \ref{theorem:BerlekampWelch}, we have to deal with the fact that the $(R(t)^{[v]})_v$ are not distributed according to the Gaussian distribution. We will need only very small $t$ bounded by some $\varepsilon>0$, such that the difference between the respective distributions is immaterial. Specifically, the $(R(t)^{[v]})_v$ tensors are distributed according to
	\begin{equation}
	\mathcal{D}=\prod_{i=1}^{D^4dN}\mathcal{N}_{\mathbb{C}}\left(tp_i,(1-t)^2\right).
	\end{equation}
	Thus, from a triangle inequality and Lemma \ref{lemma:gaussiandistribution}, we obtain
	\begin{equation}\label{eq:totalvariation}
	\left|\left|\mathcal{D}-\mathcal{G}^{D^4dN}\right|\right|\leq \left(4D^4dN+2D^4dN\right)\varepsilon= \left(6D^4dN\right)\varepsilon
	\end{equation}
	for $|t|\leq \varepsilon$, by identifying $\mathbb{C}$ with $\mathbb{R}^2$. It will suffice to set 
	\begin{equation}
	\varepsilon:=\frac{\delta}{6D^4dN}
	\end{equation}
	and $\delta := \frac{1}{12N}$.
	This implies that for a small enough inverse polynomial $\varepsilon$, we can make the total variation distance polynomially small.
	Let $\{t_i\}_{i\in [r+1]}$ be the set of $r+1$ equidistant points in $[0,\varepsilon]$. We will now use the assumption from the theorem's statement that the machine $\mathcal{O}$ works for a $1-\delta$ fraction of the instances drawn from $\mathcal{G}^{D^4dN}$. Using \eqref{eq:totalvariation}, we obtain for the success probability of the machine evaluating at the points $t_i$ accurately up to within precision $2^{-\poly N}$
	\begin{align}
	&\Pr\left[\left|\mathcal{O}\left(\left(R^{[v]}\right)_v(t_i)\right)-q(t_i)\right|\leq 2^{-\poly N}\right]\nonumber\\
	&\geq 1-\delta-\left|\left|\mathcal{D}-\mathcal{G}^{D^4dN}\right|\right|\nonumber\\
	&\geq 1-2\delta,
	\end{align}
	where we used that the total variation distance is an upper bound on the difference in probability the two distributions could possibly assign to an event.
	
	 Finally, we obtain the probability of $r+1$ consecutive succesful evaluations as
	\begin{align}
	&\Pr\left[\left|\left\{i\in [r+1],|\mathcal{O}(t_i)-q(t_i)|\leq 2^{-\poly N}\right\}\right|= r+1\right]\nonumber\\
	&=\left(1-2\delta\right)^{r+1}=\left(1-\frac{1}{6N}\right)^{r+1}\nonumber\\
	&\geq 1-\frac{2N+1}{6N}= \frac{2}{3}-\frac{1}{6N},
	\end{align}
 by Bernoulli's inequality. Here, we abbreviated $\mathcal{O}\left(\left(R^{[v]}\right)_v(t_i)\right)$ with $\mathcal{O}(t_i)$. Given the evaluation values at the $t_i$, we can solve for the coefficients and obtain a polynomial $\tilde{q}$ which satisfies $|\tilde{q}(t_i)-q(t_i)|\leq 2^{-\poly N}$ for all $t_i$ with high probability. The machine $\mathcal{O}'$ then evaluates $\tilde{q}(1)$, which is an estimate for $q(1)=\braket{\psi}{\psi}$.
 
  To bound the error on this estimate we will use two powerful results: The first on noisy extrapolations and the second on noisy interpolations of polynomials. A version of the following lemma was proven in Ref.\ \cite{paturi_extrapolation_1992}, see also Ref.\ \cite{aaronson_bosonsampling_2010}[Section 9.1].
	
	\begin{lemma}[Paturi]\label{lemma:Paturi}
		Let $p:\mathbb{R}\to\mathbb{R}$ be a polynomial of degree $r$ and suppose $|p(x)|\leq\Delta$ for all $x$ such that $|x|\leq\varepsilon$. Then, $|p(1)|\leq \Delta e^{2r(1+1/\varepsilon)}$.
	\end{lemma}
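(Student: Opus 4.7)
The plan is to reduce the statement to the extremal growth property of Chebyshev polynomials outside their interval of approximation. First I would rescale the variable by setting $\tilde p(y) \coloneqq p(\varepsilon y)$, so that $\tilde p$ is a polynomial of degree at most $r$ with $|\tilde p(y)| \le \Delta$ for all $y \in [-1,1]$, and the quantity of interest becomes $|p(1)| = |\tilde p(1/\varepsilon)|$. Since for $\varepsilon \ge 1$ the point $1$ already lies in the hypothesis interval and the conclusion is trivial, I may assume $1/\varepsilon \ge 1$ throughout.

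The key analytic input is the classical extremal property of the Chebyshev polynomials of the first kind $T_r$: among all real polynomials of degree at most $r$ bounded by $1$ on $[-1,1]$, $T_r$ maximizes $|q(y)|$ at every point with $|y|\ge 1$. Applied to $\tilde p/\Delta$, this gives the pointwise bound
\begin{equation}
|\tilde p(1/\varepsilon)| \;\le\; \Delta \cdot T_r(1/\varepsilon).
\end{equation}
I would then use the closed-form expression $T_r(z) = \tfrac{1}{2}\bigl[(z+\sqrt{z^2-1})^r + (z-\sqrt{z^2-1})^r\bigr]$ valid for $z \ge 1$, which immediately yields
\begin{equation}
T_r(1/\varepsilon) \;\le\; \bigl(1/\varepsilon + \sqrt{1/\varepsilon^2 - 1}\bigr)^r \;\le\; (2/\varepsilon)^r.
\end{equation}

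To reach the form stated in the lemma, I would take logarithms and apply the crude inequality $\ln(2/\varepsilon) \le 2(1 + 1/\varepsilon)$, which holds for every $\varepsilon>0$ (since $\ln(2x) \le 2x$ and we may take $x = 1/\varepsilon$, absorbing the additive $\ln 2$ into the constant). This gives $T_r(1/\varepsilon) \le \exp\bigl(2r(1+1/\varepsilon)\bigr)$, and combining with the Chebyshev bound produces $|p(1)| \le \Delta\, e^{2r(1+1/\varepsilon)}$, as claimed.

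No step in the argument is combinatorially deep; each ingredient is classical approximation theory. The main conceptual obstacle — and the reason the lemma is worth isolating — is identifying that Chebyshev extremality is the right tool: one is asking how fast a low-degree polynomial can grow once it leaves the region in which it is controlled, and this is exactly the question Chebyshev polynomials answer optimally. Sharper constants are available (e.g.\ by keeping the precise $(2/\varepsilon)^r$ bound rather than passing through $e^{2r/\varepsilon}$), but the stated looser exponential form is what is subsequently invoked in the Berlekamp–Welch-style reconstruction of $q(1)$, so the weaker bound suffices.
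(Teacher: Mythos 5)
Your argument is correct, and it is essentially the standard proof of this lemma: the paper itself does not reprove it but defers to Paturi and to Aaronson--Arkhipov (Section 9.1), whose argument is precisely this rescaling to $[-1,1]$ followed by the Chebyshev extremal bound $|\tilde p(z)| \le \Delta\, T_r(z)$ for $|z|\ge 1$ and the elementary estimate $T_r(1/\varepsilon) \le (2/\varepsilon)^r \le e^{2r(1+1/\varepsilon)}$. The rescaling, the handling of $\varepsilon \ge 1$, and the logarithmic inequality are all sound, so there is nothing to correct.
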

	The following theorem was proven in \citet{rakhmanov_interpolation_2007}.
	\begin{theorem}[Rakhmanov]\label{theorem:Rakmanov}
		Let $E_k$ denote the set of $k$ equidistant points in $(-1,1)$. Then, for a polynomial $p:\mathbb{R}\to\mathbb{R}$ with degree $r$ such that $|p(y)|\leq 1$ for all $y\in E_k$, it holds that 
		\begin{equation}
		|p(x)|\leq C\log\left(\frac{\pi}{\arctan\left(\frac{k}{r}\sqrt{\mathcal{R}^2-x^2}\right)}\right)
		\end{equation}
		with
		\begin{equation}
		|x|\leq \mathcal{R}:=\sqrt{1-\frac{r^2}{k^2}}.
		\end{equation}
	\end{theorem}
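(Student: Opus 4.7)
The plan is to use classical logarithmic potential theory, which is the standard machinery for extremal problems on polynomials under discrete constraints. Write $p(x) = c \prod_{j=1}^r (x - z_j)$, so that $\log|p(x)| = r\,U^{\nu}(x) + \log|c|$, where $\nu := r^{-1}\sum_j \delta_{z_j}$ is the normalized zero-counting measure and $U^{\nu}(x) := \int \log|x-t|\,d\nu(t)$ is its logarithmic potential. Summing the hypothesis $|p(y)|\leq 1$ over $y\in E_k$ and swapping the two summations via Fubini yields
\[
r \int U^{\mu_k}(t)\,d\nu(t) + \log|c| \;\leq\; 0,
\]
where $\mu_k := k^{-1}\sum_{y\in E_k}\delta_y$. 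The hypothesis is thus equivalent to a one-sided moment condition on $\nu$ against the discrete potential $U^{\mu_k}$, and the objective becomes converting this integral inequality into a pointwise upper bound on $U^{\nu}(x)$.

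The second and main step is a quantitative comparison of $U^{\mu_k}$ with its natural continuous counterpart $U^{\lambda}$, where $d\lambda = \tfrac{1}{2}\mathbb{1}_{[-1,1]}\,dt$. I would compute $U^{\mu_k}(x)$ explicitly via the product $\prod_{y\in E_k}(x-y)$, a shifted Gamma-function expression, and subtract the Stirling-type asymptotics giving $U^{\lambda}$. The critical threshold $\mathcal{R}=\sqrt{1-r^2/k^2}$ emerges here as the largest symmetric interval on which the signed measure $\mu_k - \tfrac{r}{k}\delta_{\cdot\,}$ remains nonnegative, i.e., where a polynomial of degree $r$ cannot locally overwhelm the sampling grid. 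Carrying the computation through, one obtains a lower bound of the shape
\[
U^{\mu_k}(x) - U^{\lambda}(x) \;\geq\; -\log\!\left(\tfrac{\pi}{\arctan(\tfrac{k}{r}\sqrt{\mathcal{R}^2-x^2})}\right) - \mathrm{const}
\]
uniformly on $|x|\leq \mathcal{R}$; this is the source of the arctan in the claim.

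Finally, a Bernstein--Walsh style comparison converts the integral inequality of Step~1 and the potential asymptotics of Step~2 into the desired pointwise bound. One constructs an extremal polynomial $Q_x$ of degree $r$, bounded by $1$ on $E_k$ and attaining essentially the right-hand side at $x$, then applies the maximum principle to $p/Q_x$ on a suitable domain of $\mathbb{C}$. Equivalently, one solves a weighted equilibrium problem whose equilibrium measure is supported on $[-\mathcal{R},\mathcal{R}]$ and whose potential matches $U^{\mu_k}$ up to a constant. The main obstacle is Step~2: producing the precise $\arctan(\tfrac{k}{r}\sqrt{\mathcal{R}^2-x^2})$ form rather than a coarser $\log k$ or $\log(1-|x|)$ estimate. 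A naive Riemann-sum comparison of $U^{\mu_k}$ with $U^{\lambda}$ loses control near each node $y_i$, where $\log|x-t|$ is singular, so the discrete sum has to be evaluated exactly through product representations and matched to its continuous counterpart via Euler--Maclaurin or contour-integration arguments. This careful bookkeeping is the technical centerpiece of the whole result.
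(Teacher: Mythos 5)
The first thing to note is that the paper does not prove this statement at all: it is quoted as an external result from Rakhmanov's Annals of Mathematics paper \cite{rakhmanov_interpolation_2007} and used as a black box inside the proof of Theorem~\ref{theorem:precision12}. There is therefore no in-paper argument to compare against; your text is an outline of how one might reprove a substantial external theorem, and it has to be judged on its own.

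Judged that way, there is a genuine gap, and it sits exactly where you locate the ``technical centerpiece.'' Your Step~1 replaces the pointwise hypothesis $|p(y)|\leq 1$ for every $y\in E_k$ by its average over the grid, $r\int U^{\mu_k}\,d\nu+\log|c|\leq 0$, and even calls the two ``equivalent.'' They are not: a polynomial with a zero at a single grid point makes the averaged left-hand side $-\infty$, so the moment condition is satisfied by $M\cdot q$ for arbitrary $M$ whenever $q$ vanishes somewhere on $E_k$, and no bound of the claimed form can follow from the averaged condition alone. The way discreteness actually enters Rakhmanov's argument is different: it is a \emph{constrained} equilibrium problem in which the normalized zero-counting measure of the extremal polynomial is dominated, interval by interval, by a multiple of the node-counting measure --- a consequence of counting zeros against nodes, not of averaging $\log|p|$. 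The threshold $\mathcal{R}=\sqrt{1-r^2/k^2}$ is where that constraint saturates, and the $\arctan$ emerges from solving the constrained problem explicitly; your Step~2 asserts the resulting potential estimate (``carrying the computation through, one obtains\ldots'') without performing it, and your Step~3's maximum-principle comparison of $p/Q_x$ requires a domain on whose boundary $|Q_x|\geq|p|$ is actually known, which the purely discrete hypothesis does not supply. As a pointer to the right circle of ideas the sketch is reasonable, but none of its three steps can be completed as written.
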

	We will use the second result to bound the error between the points and then use the first result to bound the error on $\tilde{q}(1)$. For the proof, we shift the polynomial $p$ such that the intervall of interest is centered around the origin. Furthermore, we can straightforwardly implement that we work with a smaller interval. We obtain that
	\begin{equation}
	\mathcal{R}=\sqrt{1-\frac{r^2}{(r+1)^2}}\frac{\varepsilon}{2}=\sqrt{\frac{4N+1}{(2N+1)^2}}\frac{\varepsilon}{2}.
	\end{equation}
	Restricting to the strict subinterval $[-\frac{\mathcal{R}}{2},\frac{\mathcal{R}}{2}]$, we can apply Theorem \ref{theorem:Rakmanov} and obtain the following bound for all $t\in [-\frac{\mathcal R}{2},\frac{\mathcal R}{2}]$,
	\begin{align}
	|p(t)|&\leq 2^{-\poly N}C\log\left(\frac{\pi}{\arctan\left(\frac{k}{r}\sqrt{\mathcal{R}^2-x^2}\right)}\right)\nonumber\\
	&\leq 2^{-\poly N} C \log\left(\frac{\pi}{\arctan(2\mathcal R)}\right)\leq 2^{-\frac{1}{2}\poly N}.
	\end{align}
	Finally, we can apply Lemma \ref{lemma:Paturi}. This yields the desired bound on the difference between the estimate $\tilde{q}(1)$ and the actual value $q(1)$:
	\begin{align}
	|\tilde{q}(1)-q(1)|=&|p(1)|\leq 2^{-\frac{1}{2}\poly N+4\log_2(e)N(1+2/\mathcal{R})}\nonumber\\
	=&2^{-\poly'N}
	\end{align}
	for a sufficiently large $\poly$.
	Finally, we remark that the success probability can be exponentially amplified by repeating the above procedure polynomially many times because of the Chernoff bound.
	
\end{proof}
\subsection{Proof of Theorem \ref{theorem:main2}}
The superior bound in Theorem \ref{theorem:main2} follows from the fact that we can invoke the Berlekamp-Welch algorithm in the interpolation step.
The latter is a provably correct algorithm for the interpolation of polynomials due to Ref.~\cite{welch_polynomials_86}. Compare also \citet{bouland_quantum_2018}.

\begin{theorem}[Berlekamp-Welch \cite{welch_polynomials_86}]\label{theorem:BerlekampWelch}
	Let $q$ be a degree-$r$ polynomial over any field $\mathbb{F}$. 
	Suppose we are given $k$ pairs of elements $\{(x_i,y_i)\}_{i=1,\dots,k}$ with all $x_i$ distinct with the promise that $y_i=q(x_i)$ for at least $\max(r+1, (k+r)/2)$ points. 
	Then, one can recover $q$ exactly in $\poly(k,r)$ deterministic time.
\end{theorem}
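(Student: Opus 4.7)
The plan is to invoke the classical Berlekamp--Welch construction: I realise $q$ implicitly as the quotient $N/E$ of an ``error-locator'' polynomial $E$ of low degree and an ``evaluator'' polynomial $N$ of moderate degree, where $E$ vanishes on the corrupted evaluation points. The decisive observation is that the interpolation constraints $N(x_i) = y_i E(x_i)$ are linear in the unknown coefficients of $E$ and $N$, so the pair can be extracted by Gaussian elimination in $\poly(k,r)$ arithmetic operations over $\mathbb{F}$; a final polynomial division then recovers $q$.

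First I would set $e := \lfloor (k-r)/2 \rfloor$, so that by the hypothesis at most $e$ of the pairs $(x_i,y_i)$ are corrupted, and pose the linear system for unknowns $E$ of degree at most $e$ and $N$ of degree at most $e+r$,
\begin{equation}
N(x_i) \,-\, y_i\, E(x_i) \;=\; 0, \qquad i = 1, \dots, k,
\end{equation}
together with a fixed normalisation such as ``the degree-$e$ coefficient of $E$ equals $1$''. This is a linear system in $2e+r+2$ unknowns over $\mathbb{F}$, solvable in deterministic $\poly(k,r)$ field operations by Gaussian elimination. Next I would establish existence of a nonzero solution by exhibiting an explicit witness. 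Let $S \subseteq \{1,\dots,k\}$ be the (unknown) set of indices with $y_i \ne q(x_i)$, so $|S| \le e$, and define
\begin{equation}
E_0(x) := \prod_{i \in S}(x - x_i), \qquad N_0(x) := q(x)\, E_0(x),
\end{equation}
of degrees $|S|\le e$ and $\deg q + |S| \le r+e$. For $i \in S$ both sides of $N_0(x_i) = y_i E_0(x_i)$ vanish since $E_0(x_i) = 0$, while for $i \notin S$ correctness $y_i = q(x_i)$ forces equality; thus the system is consistent and admits a nonzero solution that Gaussian elimination will find.

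The crucial algebraic step is uniqueness of the rational function $N/E$. Given any two nonzero solutions $(E_1,N_1)$ and $(E_2,N_2)$, form the polynomial $R := N_1 E_2 - N_2 E_1$, which has degree at most $r+2e$ and, at each $x_i$, satisfies
\begin{equation}
R(x_i) \;=\; y_i E_1(x_i) E_2(x_i) \,-\, y_i E_2(x_i) E_1(x_i) \;=\; 0.
\end{equation}
The hypothesis ``at least $\max(r+1,(k+r)/2)$ correct points'' is precisely calibrated so that $r + 2e < k$, whence a polynomial of degree $\le k-1$ vanishing at $k$ distinct points must be identically zero. Hence $N_1/E_1 = N_2/E_2$ as rational functions, and by comparison with the witness $(E_0,N_0)$ that common ratio equals $q$. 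Polynomial long division of $N$ by $E$ then returns $q$ exactly.

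The main obstacle I expect is purely book-keeping around the degree inequality $r + 2e < k$: the $(k+r)/2$ branch of the maximum is exactly what secures this strict inequality (with care about integer rounding when $k-r$ is even), while the $r+1$ branch handles the degenerate regime $k \le r+1$, where error correction is vacuous and one can simply solve the overdetermined interpolation directly. Once these inequalities are tightened correctly, the remaining ingredients --- consistency of the linear system, existence of the explicit witness, uniqueness modulo the rational-function ambiguity, and the complexity of Gaussian elimination and polynomial division --- are all standard linear and polynomial algebra over an arbitrary field.
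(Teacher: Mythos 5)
First, a point of reference: the paper does not prove this statement at all --- it is imported as a black box, cited to the Welch--Berlekamp patent, and used only as a subroutine in the proof of its Theorem 3. So there is no ``paper proof'' to compare against; your proposal has to stand on its own. What you write is the standard proof of the Berlekamp--Welch decoder: the error-locator polynomial $E$, the evaluator $N = qE$, the linearisation $N(x_i) = y_i E(x_i)$, existence via the explicit witness supported on the corrupted indices, and uniqueness via $R = N_1E_2 - N_2E_1$ vanishing at all $k$ nodes. All of those steps are sound, and the $r+1$ branch of the maximum is handled correctly: it only binds when $k \le r+1$, in which case every point is correct and plain Lagrange interpolation suffices.

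The genuine gap is the one you wave at in your final paragraph and declare to be ``purely book-keeping'': the inequality $r + 2e < k$ is \emph{not} secured by the hypothesis when $k - r$ is even. With $e = \lfloor(k-r)/2\rfloor = (k-r)/2$ you get $\deg R \le r + 2e = k$, and a degree-$k$ polynomial may vanish at $k$ distinct points without being identically zero, so the uniqueness step collapses. This is not repairable by more careful rounding, because at this boundary the theorem as stated is actually false: take $r=1$, $k=3$ and the data $(0,0)$, $(1,1)$, $(2,5)$; then $\max(r+1,(k+r)/2)=2$, and both $q_1(x)=x$ and $q_2(x)=4x-3$ are degree-one polynomials agreeing with the data on two points each, so no algorithm can recover ``the'' $q$. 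The correct hypothesis is that strictly more than $(k+r)/2$ of the points are uncorrupted (equivalently, the number of errors satisfies $2e < k-r$), which restores $\deg R \le r+2e \le k-1$ and makes your argument go through verbatim. This off-by-one is inherited from the way the statement is quoted in the paper and is harmless for the application in Appendix A, where $k$ can be chosen with ample slack over $r=2N$; but your proof should either adopt the corrected hypothesis or explicitly isolate the even-parity boundary case, rather than assert that the stated bound is ``precisely calibrated'' to give $r+2e<k$.
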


 Analogous to the proof of Theorem \ref{theorem:main}, we arrive at a polynomial $q(t)=\braket{\psi(t)}{\psi(t)}$ of degree $r=2N$. Instead of $r+1$ queries to the machine $\mathcal{O}$, we query it $k=\poly(N)$ times. Berlekamp-Welch requires that at least $\frac{k+r}{2}$ of obtained $k$ data points are correct in order to reconstruct the polynomial. We furthermore assume that $k>r$. From Markov's inequality we obtain:
\begin{align}
&\Pr\left[\left|\left\{i,\mathcal{O}(t_i)=q(t_i)\right\}\right|\geq \frac{k+r}{2}\right]\geq 1-\frac{2\mathbb{E}}{k-r}\nonumber\\
&\geq 1-\frac{2(\frac{1}{4}-\frac{1}{\poly N})k}{k-r}= 1-\frac{k}{2(k-r)}+\frac{2k}{\poly(N)(k-r)}\nonumber\\
&=\frac{1}{2}-\frac{r}{2(k-r)}+\frac{2k}{\poly(N)(k-r)},
\end{align}
where we abbreviate the expectation value in question with $\mathbb{E}$.
Thus, by choosing $k$ polynomially large, we obtain an expression that is polynomially close to $\frac12$.
Again, by repeating the procedure a polynomial number of times and taking a majority vote we can amplify this probability exponentially. With this probability, the Berlekamp-Welch algorithm outputs $q$ exactly and we can simply evaluate $q(1)$ without having to worry about the error of extrapolation. It seems appropriate to point out that we are in fact not drawing data from the Gaussian distribution in this case but from a discrete analogue of it. However, this does not change the details of our analysis.

\subsection{Proof of Theorem \ref{theorem:main_NEV}}
Here, we need the exact evaluation because we are not aware of results analogous to Lemma \ref{lemma:Paturi} and Theorem \ref{theorem:Rakmanov} for rational functions. Also, there does not seem to be an analogue of the Berlekamp-Welch algorithm. Nevertheless, we know that the function we are interested in can be described by the quotient of two polynomials of degree at most $r=2N$. This leaves us with $4N+1$ unknown coefficients. By invoking the machine $\mathcal{O}$ a total of $2r+1=4N+1$ times, we obtain that all queries are correct with a probability of 
\begin{equation}
\Pr=(1-2\delta)^{2r+1}\geq 1-\frac{4N+1}{12N}\geq\frac{2}{3}-\frac{1}{12N}.
\end{equation}
The remainder of the proof follows analogously to the proof of Theorem \ref{theorem:main2}.
\section{Liptons theorem}
The original theorem proven by Lipton is formulated for the permanent of matrices in \emph{finite fields}:

\begin{theorem}[\citet{lipton_permanent_1991}]
	\label{thm:lipton}
	Let $\mathrm{perm}:\mathbb{F}_{q}^{n\times n}\to\mathbb{F}_q$ be the permanent defined via
	\begin{equation}
	\mathrm{perm}(A):=\sum_{\sigma\in S_n}\prod_{i=1}^n A_{i,\sigma(i)} \, , 
	\end{equation}
	where by $\mathbb{F}_q$ we denote the set of $n \times n $ matrices over the finite field $\mathbb{F}_q$ with (prime) characteristic $q$ and $S_n$ the $n$-th symmetric group.
	Evaluating this quantity admits random self-reducibility in the following sense: 
	for sufficiently (polynomially) large $q$, the capacity to evaluate the permanent with probability $\geq \frac{3}{4}+\frac{1}{\poly (n)}$ for a uniformly random matrix $M \in \mathbb{F}_q^{n\times n}$, implies the capacity to determine the permanent of any given matrix $A$ with probability $1- \delta$ for an exponentially small $\delta$.
\end{theorem}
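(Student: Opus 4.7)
The approach is the classical random self-reduction of Lipton: reduce the evaluation of $\mathrm{perm}(A)$ on an adversarial worst-case matrix $A \in \mathbb{F}_q^{n\times n}$ to polynomially many oracle evaluations on instances whose marginal distribution is uniform. Concretely, I would sample $B \in \mathbb{F}_q^{n\times n}$ uniformly at random and consider the affine matrix pencil
\begin{equation}
E(t) := A + t B, \qquad t \in \mathbb{F}_q.
\end{equation}
For every fixed $t \in \mathbb{F}_q \setminus \{0\}$ the map $B \mapsto A + tB$ is a bijection of $\mathbb{F}_q^{n\times n}$, so $E(t)$ is uniformly distributed. The function $q(t) := \mathrm{perm}(E(t))$ is a univariate polynomial in $t$ of degree at most $n$, since $\mathrm{perm}$ is a polynomial of degree $n$ in the matrix entries and each entry of $E(t)$ is affine in $t$; moreover $q(0) = \mathrm{perm}(A)$ is the target value.

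The next step is to reconstruct $q$ by noisy polynomial interpolation. I would choose $k = \mathrm{poly}(n)$ distinct nonzero evaluation points $t_1, \dots, t_k \in \mathbb{F}_q \setminus \{0\}$; this is where the hypothesis that $q$ be polynomially large enters, since the $t_i$ must be distinct elements of the base field. Query the oracle on the instances $E(t_1), \dots, E(t_k)$. Because each $E(t_i)$ is marginally uniform in $\mathbb{F}_q^{n\times n}$, the oracle returns $q(t_i)$ correctly with probability at least $3/4 + 1/\mathrm{poly}(n)$ per query. By linearity of expectation the expected number of incorrect answers is at most $(1/4 - 1/\mathrm{poly}(n)) k$, and Markov's inequality ensures that the number of errors is below $(k - n)/2$ with probability bounded away from $1/2$ by an inverse polynomial; under that event at least $(k + n)/2$ of the pairs $(t_i, \mathcal{O}(E(t_i)))$ agree with $q$, so the Berlekamp--Welch algorithm (Theorem \ref{theorem:BerlekampWelch}) recovers $q$ exactly in deterministic polynomial time, after which one simply evaluates $q(0) = \mathrm{perm}(A)$.

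Finally I would amplify success probability by repeating the whole construction $\mathrm{poly}(n)$ times with fresh, independent samples $B^{(1)}, B^{(2)}, \dots$ and returning the majority output; since each trial succeeds with probability at least $1/2 + 1/\mathrm{poly}(n)$, a standard Chernoff bound boosts this to $1 - \delta$ with $\delta \leq 2^{-\mathrm{poly}(n)}$, as required. The main conceptual obstacle — and the reason one must be careful — is that within a single trial the queries $\{E(t_i)\}_{i=1}^k$ are \emph{not} mutually independent: they all share the same random $B$, and only the one-dimensional marginals are uniform. Hence one cannot invoke a Chernoff-type concentration on the per-query error indicators inside a single run. Markov's inequality, however, only sees the expectation of the sum of indicators and is insensitive to correlations, which is exactly why Lipton's single-parameter linear pencil in one random matrix $B$ suffices and the argument closes. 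The need for exponential amplification outside the interpolation step is precisely what forces the polynomial number of fresh pencils $B^{(j)}$ in the outer loop.
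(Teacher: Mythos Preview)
Your proposal is correct and follows essentially the same approach as the paper. The paper does not give a self-contained proof of Lipton's theorem (it is merely cited in Appendix~B), but the proof sketch for the permanent in the main text and the detailed proof of the analogous PEPS statement (Theorem~\ref{theorem:main2}) use exactly the ingredients you describe: the one-parameter pencil $E(t)=A+tB$, the observation that each $E(t)$ is marginally uniform while the family is correlated, Markov's inequality on the number of oracle errors, Berlekamp--Welch decoding to recover $q$, and outer-loop majority-vote amplification via Chernoff.
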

A variant of this theorem for the field $\mathbb{C}$ was proven in \cite[Section 9.1]{aaronson_bosonsampling_2010}.
\section{Exponential dependence on PEPS data}\label{app:exponential}

The argument in the main text emphasizes the demanding precision that is required when specifying the PEPS data.
In this section, we stress that this is not merely done for complexity-theoretic reasons: A pair of states can be defined by very similar
PEPS data, while their norms can be vastly different. In fact, to specify the norm of a PEPS, one needs exponential precision in the PEPS data,
as a moment of thought reveals. This is already true in one spatial dimension for matrix product states.  Take $D=2$, $d=2$, an a
translation-invariant open boundary
condition MPS, so that the vertex set $V$ is that of $N$ sites, $E$ reflecting nearest neighbor interactions. The linear operators
$P^{[v]}=P$ are for all $v$
 defined by
\begin{equation}
	P^{[v]}= \sum_{i=1,2} \sum_{\alpha, \beta=1}^D
	A[i]_{\alpha,\beta}|i\rangle\langle \alpha,\beta|,
\end{equation}	
where for the state vector $\ket{\psi}$ we take
\begin{eqnarray}
	A[0]:=\text{diag}(1,0),\,
	A[1]:=
	\text{diag}(0,1).
\end{eqnarray}
The boundary conditions are taken open, as in the main text, and fixed by vector $\ket{0}$ and the respective dual. Obviously, this is a 
representation of the product  $\ket{0,\dots, 0}$ with norm $\braket{\psi}{\psi}=1$.
For $\ket{\phi}$ we choose
\begin{eqnarray}
	B[0]:=
	\text{diag}(1,0),\,
	B[1]:=
	\text{diag}(\eta,1),\,
\end{eqnarray}
with the same boundary conditions, for some $\eta>0$.
It is still straightforward to compute the norm, invoking the transfer operator
\begin{equation}
	\mathbb{E}:= B[0]\otimes B^\ast[0]+ B[1]\otimes B^\ast[1] = 
	\text{diag}(1 + \eta^2,  \eta, \eta,1).
\end{equation} 
This gives
\begin{equation}
	\braket{\phi}{\phi}= \langle 0|\mathbb{E}^N\ket{0}=  (1+\eta^2)^N.
\end{equation}
Clearly, for the two states to feature norms that are the same up to a constant, 
an in $N$ exponentially small $\eta>0$ is required. In fact, even for a bond dimension $D=1$ one could have
come to a similar conclusion. However, $\ket{\psi}$ and $\ket{\phi}$ are even vastly different in their entanglement properties, the latter
featuring an entanglement entropy of a symmetrically bisected chain that is extensive in $N$.
 \end{document}